\theoremstyle{plain}
\newtheorem{definition}{Definition}
\newtheorem{theorem}{Theorem}
\renewenvironment{proof}[1][Proof]{%
  \par\noindent\textbf{#1.} }%
  {\hfill$\square$\par}
\newcommand\mycommfont[1]{\color{green!50!black}\ttfamily\small #1}
\renewcommand{\footnoterule}{%
    \kern -3pt 
    \hrule width 0.5\columnwidth height 0.4pt 
    \kern 2pt 
}
\newcommand{\eat}[1]{}
\newcommand{\changecolor}[1]{}          
\definecolor{mygreen}{rgb}{0.0, 0.5, 0.0}
\def\BibTeX{{\rm B\kern-.05em{\sc i\kern-.025em b}\kern-.08em
    T\kern-.1667em\lower.7ex\hbox{E}\kern-.125emX}}
\begin{document}

\title{
FLASH Viterbi: Fast and Adaptive Viterbi Decoding for Modern Data Systems\\
\thanks{$\dagger$ These authors contributed equally to this work. *Corresponding author.}
}
\author{
    \IEEEauthorblockN{
        Ziheng Deng$^{1,\dagger}$, Xue Liu$^{1,\dagger}$, Jiantong Jiang$^{2}$, Yankai Li$^{1}$, Qingxu Deng$^{1,*}$, Xiaochun Yang$^{1}$
    }
    \IEEEauthorblockA{
        \textit{$^1$School of CSE, Northeastern University, China} 
        \textit{$^2$School of CIS, The University of Melbourne, Australia} \\
        $^1$\{2310721@stu, liuxue@cse, 20215828@stu, dengqx@mail, yangxc@mail\}.neu.edu.cn, $^2$ jiantong.jiang@unimelb.edu.au
    }
}
\maketitle
\begin{abstract}

The Viterbi algorithm is a key operator for structured sequence inference in modern data systems, with applications in trajectory analysis, online recommendation, and speech recognition. As these workloads increasingly migrate to resource-constrained edge platforms, standard Viterbi decoding remains memory-intensive and computationally inflexible. Existing methods typically trade decoding time for space efficiency, but often incur significant runtime overhead and lack adaptability to various system constraints. 
This paper presents \textsc{FLASH Viterbi}, a Fast, Lightweight, Adaptive, and Hardware-Friendly Viterbi decoding operator that enhances adaptability and resource efficiency. \textsc{FLASH Viterbi} combines a non-recursive divide-and-conquer strategy with pruning and parallelization techniques to enhance both time and memory efficiency, making it well-suited for resource-constrained data systems. 
To further decouple space complexity from the hidden state space size, we present \textsc{FLASH-BS Viterbi}, a dynamic beam search variant built on a memory-efficient data structure. Both proposed algorithms exhibit strong adaptivity to diverse deployment scenarios by dynamically tuning internal parameters.
To ensure practical deployment on edge devices, we also develop FPGA-based hardware accelerators for both algorithms, demonstrating high throughput and low resource usage.
Extensive experiments show that our algorithms consistently outperform existing baselines in both decoding time and memory efficiency, while preserving adaptability and hardware-friendly characteristics essential for modern data systems. All codes are publicly available at \url{https://github.com/Dzh-16/FLASH-Viterbi}.
\end{abstract}
\begin{IEEEkeywords}
Viterbi decoding, resource-adaptive, parallelism, space efficiency, hardware-friendly, edge computing
\end{IEEEkeywords}

\section{Introduction}
    \begin{figure}
		\centerline{\includegraphics[width=1\linewidth]{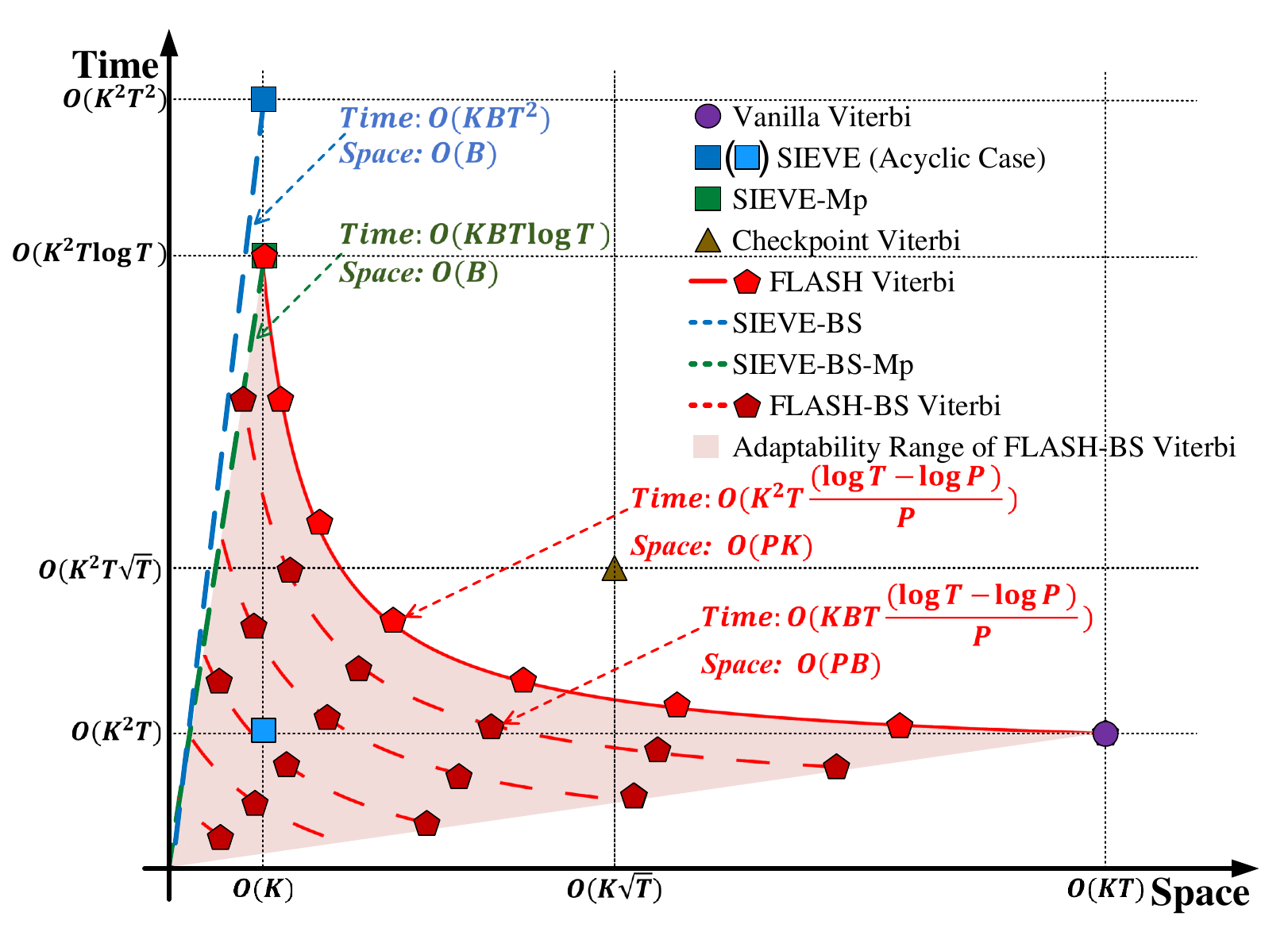}}
		\caption{Theoretical time-space complexity comparison of \textsc{FLASH Viterbi}, \textsc{FLASH-BS Viterbi}, and baseline algorithms. 
        $K$ is the HMM state space size, $T$ is the sequence length, $B$ is the beam width, and $P$ is the parallelism degree.
        }
		\label{Fig:Theoretical comparison}
    \vspace{-2mm}
	\end{figure}
Viterbi decoding~\cite{HMM_2} is a critical operator for real-time structured inference in data systems, particularly in edge environments with limited resources. As the core decoding algorithm in probabilistic sequence models and chain-structured probabilistic graphical models such as Hidden Markov Models (HMMs) and Conditional Random Fields (CRFs)~\cite{HMM, LHMM,AutomaticRoad,LRNI, Data-Driven, TextCorporaHMMs, GraphMM,jiang2024fastpgm, jiang2024fast, ResuFormer,Maxson,CRFS3, jiang2022fast}, it recursively computes the optimal state sequence by maximizing the joint likelihood over all possible paths. This capability enables a wide range of data-driven applications like mobility trajectory inference~\cite{LHMM,AutomaticRoad,LRNI,Data-Driven}, online recommendation~\cite{OSMRec}, behavioral labeling~\cite{IndoorMobility}, and speech recognition~\cite{SIEVE,SIEVE_bs,HMM_Speech,HMM_Speech_2}.

However, as inference tasks grow increasingly complex, the standard Viterbi algorithm  (i.e., Vanilla Viterbi) remains resource-intensive and ill-suited for edge deployment~\cite{Viterbihardwarerequirements}. Specifically, when applied to an HMM, it incurs a space complexity of $\mathcal{O}(KT)$ and a time complexity of $\mathcal{O}(K^2T)$, where $K$ is the hidden state size and $T$ is the observation sequence length, as illustrated in Figure~\ref{Fig:Theoretical comparison}. They scale linearly or quadratically with $K$ and $T$, hindering deployment on resource-limited edge platforms~\cite{Viterbihardwarerequirements,Viterbihardwarerequirements2,Yazdani2017}. Such resource demands have emerged as a challenge in modern serving systems~\cite{miao2025towards, jiang2025towards}.

Recent efforts have attempted to trade off decoding time and memory usage~\cite{Lazy_Viterbi,Viterbi_problem,checkpoint,SIEVE,SIEVE_bs}, yet still incur high memory overhead or significant latency. Importantly, most approaches lack adaptivity to system-level constraints and fail to optimize Viterbi under the dynamic resource limitations typical of modern data systems, especially on edge platforms.

To this end, we propose \textsc{FLASH Viterbi} and its beam search variant \textsc{FLASH-BS Viterbi}, a new family of fast, lightweight, adaptive, and hardware-friendly 
Viterbi algorithms that address the challenges of space and time efficiency jointly. We aim to support Viterbi as a modular operator within real-time processing pipelines on edge systems.
Specifically, \textsc{FLASH Viterbi} adopts a non-recursive divide-and-conquer strategy to decompose the decoding task into subtasks, improving memory reuse and reducing scheduling overhead compared to recursive methods. Moreover, this strategy enables large-scale parallelism by reordering subtask execution.
At the subtask level, a unified pruning and parallelization strategy eliminates redundant computations and interdependencies, thereby unlocking full parallelism. 
Building on this, \textsc{FLASH-BS Viterbi} integrates dynamic beam search with an efficient data structure tailored for frequent candidate path updates.
Both algorithms employ a double-buffered memory scheme to reduce the overhead of intermediate data exchange.

These designs collectively endow \textsc{FLASH Viterbi} and \textsc{FLASH-BS Viterbi} with \textit{fast} execution and \textit{lightweight} memory usage. More importantly, they offer strong \textit{adaptivity} to diverse deployment scenarios by dynamically tuning internal parameters such as parallelism degree $P$ and beam width $B$. 
This adaptivity is illustrated by the theoretical time–space complexity comparison shown in Figure~\ref{Fig:Theoretical comparison}. The red solid curve captures the performance range of \textsc{FLASH Viterbi} under different degrees of parallelism. 
By adjusting \( P \), \textsc{FLASH Viterbi} achieves a flexible trade-off between time and space complexity, as indicated by the bright red pentagons, enabling efficient resource utilization.
\textsc{FLASH-BS Viterbi} further expands this tunable space by adjusting the beam width \( B \), as shown by the red shaded region.
Subsequent experiments show that our method outperforms existing baselines in both time and space dimensions.
In addition to the software implementation, our FPGA-based accelerator further demonstrates the \textit{hardware-friendliness} of our design. Our method’s non-recursive structure and low memory footprint translate to better hardware utilization, making it suitable for real-time, energy-efficient deployment on edge devices. The key contributions of this paper can be summarized as follows.
\begin{itemize}[leftmargin=*,itemsep=2pt,topsep=3pt,parsep=0pt]
\item We propose \textsc{FLASH Viterbi}, a Viterbi decoding algorithm that employs non-recursive design, pruning, and parallelization to achieve high time and memory efficiency.
\item We introduce \textsc{FLASH-BS Viterbi}, a variant incorporating beam search to enhance adaptability. It dynamically tunes parallelism degree and beam width to balance latency and memory usage under varying resource constraints.
\item We demonstrate the hardware-friendliness of our design through an FPGA-based implementation, enabling real-time and energy-efficient deployment on edge devices.
\end{itemize}

\section{Related Work}

To enable real-time decoding on resource-constrained edge computing platforms, a variety of improved Viterbi algorithms have been proposed~\cite{Viterbi_problem}.

\subsection{Space-Efficient Viterbi Variants}
Tarnas et al.~\cite{checkpoint} introduced Checkpoint Viterbi, which stores optimal path state information at \( \sqrt{T} \) evenly spaced timesteps (checkpoints) and re-executes the algorithm between them. Although this reduces the space complexity to \( \mathcal{O}(K\sqrt{T}) \), it still scales with sequence length \( T \). Ciaperoni et al.~\cite{SIEVE} proposed SIEVE, a divide-and-conquer approach based on state-space decomposition. It retains only the midpoint state pairs from the optimal path that lie in the center of state transitions, and uses them to divide decoding. This strategy reduces space complexity to \( \mathcal{O}(K) \). Yet, locating such midpoints requires a BFS traversal, rendering SIEVE sensitive to the density of the state-transition graph. In cyclic graphs, the time complexity may increase to \( \mathcal{O}(K^2T^2) \). A variant, SIEVE-MiddlePath (SIEVE-Mp)~\cite{SIEVE}, adopts a divide-and-conquer strategy based on sequence length. It maintains a space complexity of \( \mathcal{O}(K) \) and consistently achieves \( \mathcal{O}(K^2T\log{T}) \) time complexity regardless of graph density, making it the SOTA approach for space-efficient Viterbi decoding. However, existing space-efficient Viterbi algorithms primarily rely on recursive implementations, which incur considerable overhead from stack memory and scheduling. In addition, the decoding order of subtasks often misaligns with their generation dependencies, hindering parallel execution.

\subsection{Beam Search-based Viterbi Enhancements}

Beam search is widely adopted for efficient Viterbi decoding~\cite{Beam_search_1,kaldi,Beam_search_2,Beam_search_3,Beam_search_4,Beam_search_5}. Instead of maintaining all possible paths for the full state space (\(K\)), it retains only the top \(B\) candidate paths (where \(B\) is the beam width) at each timestep. This strategy reduces both computational cost and memory footprint with minimal impact on accuracy.
Beam search implementations are generally categorized as static or dynamic based on when path pruning occurs. Static beam search either retains the top-\(B\) paths after computing all candidates~\cite{SIEVE_bs} or uses fixed thresholds for pruning~\cite{Yazdani2017,He,Beam_search_4}. However, such methods must temporarily store scores for all \(K\) states at each timestep, leading to limited actual memory savings.
In contrast, dynamic beam search~\cite{Michael,He2} performs real-time pruning during the computation of path probabilities, maintaining only \(B\) paths at each step. This design significantly improves memory efficiency and makes dynamic beam search more suitable for deployment on resource-constrained platforms. However, its effectiveness depends on efficient data structures that support frequent path updates and replacements during decoding.

\subsection{Hardware-Accelerated Viterbi Decoding}

Hardware acceleration has been actively explored to enhance Viterbi decoding~\cite{FPGA-basedViterbi,FPGA-basedViterbi2,Beam_search_1}. 
Pani et al.~\cite{ReducedViterbi} proposed an FPGA-based beam search Viterbi decoder that employs a memory architecture comprising two binary search trees and a max-heap to support path updates. Despite its effectiveness, the design exhibits structural redundancy, leading to high on-chip resource usage and increased power consumption.

\section{Preliminaries}\label{sec:sysmodel}

    In edge-side sequence inference, sensor data is typically represented as a sequence of observed features. 
    Hidden Markov Models (HMMs) offer a probabilistic framework that links these observations to an underlying chain of hidden states. 
    The states evolve according to a first-order Markov process, where each state depends only on its predecessor, 
    while each observation is generated from the emission distribution associated with the current state~\cite{murphy2012machine, miyato2015distributional, yang2024regulating}. 
    By modeling both the temporal dynamics of the hidden states and their influence on the observations, HMMs effectively capture dependencies in sensor data and allow efficient decoding via algorithms like Viterbi.
        
    A typical HMM consists of the following five key components:  
1) An observation set \( O \) containing \( M \) observable symbols;  
2) A hidden state set \( S \) comprising \( K \) hidden states;  
3) An initial state probability vector \( \pi \), specifying the prior distribution over hidden states at the initial timestep;  
4) A state transition probability matrix \(\mathcal{A}\), defining the probabilities of transitions between hidden states;  
5) An emission probability matrix \( \mathcal{B} \), representing the conditional probabilities of observations given hidden states. Hidden states are discrete, while observations may be either discrete (e.g., modeled by categorical distributions) or continuous (e.g., modeled by Gaussian mixture models or deep neural networks~\cite{Deep_Neural}). An HMM can be specified by the triplet \( (\pi, \mathcal{A}, \mathcal{B}) \).    

\begin{figure}[t!]
    \centering
    \includegraphics[width=1\columnwidth]{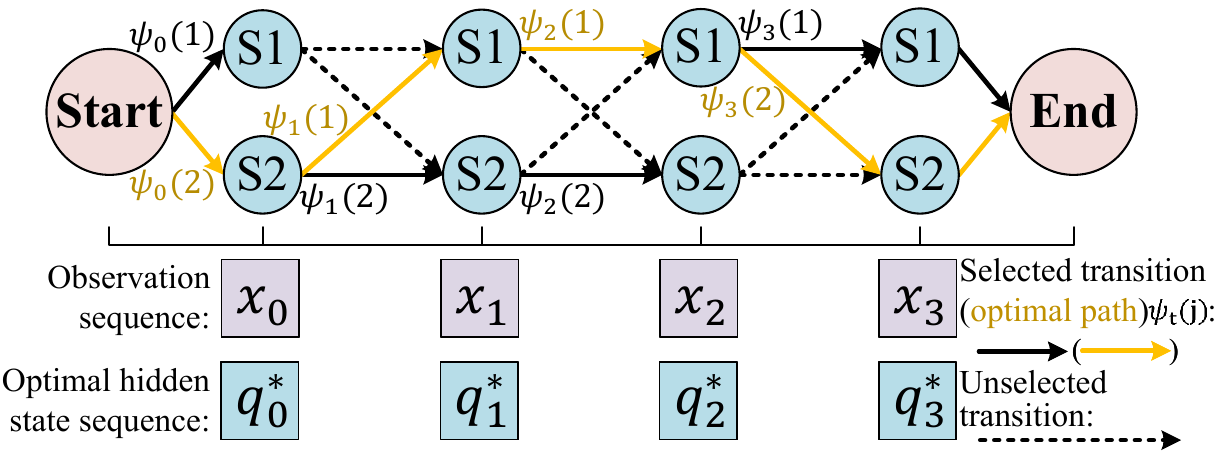}
    \caption{Trellis diagram of Viterbi decoding, showing the optimal hidden state sequence (\(q_t^*\)) for the observation sequence (\(x_t\)).}
    \label{fig:BACKGROUND}
    \vspace{-2mm}
\end{figure}

\subsection{Viterbi Algorithm}

Given an HMM $\lambda = (\pi, \mathcal{A}, \mathcal{B})$, the decoding problem is defined as follows: For an observation sequence \( X = \{x_0, x_1, \dots, x_{T-1}\} \), drawn from the observation set \( O = \{o_1, o_2, \dots, o_M\} \), determine the optimal hidden state sequence \( Q^* = \{q_0^*, q_1^*, \dots, q_{T-1}^*\} \), denoted as the optimal path, selected from the hidden state set \( S = \{s_1, s_2, \dots, s_K\} \), that satisfies:
{\setlength{\abovedisplayskip}{3pt}
 \setlength{\belowdisplayskip}{3pt}
\begin{align}
Q^* &= \underset{Q}{\arg\max}\, P(Q\,|\,X, \lambda), \notag 
\label{eq1}
\end{align}
}
where \(Q = \{q_0, q_1, \dots, q_{T-1}\} \) denotes all possible hidden state sequences of length \( T \). The Viterbi algorithm employs dynamic programming to solve HMM decoding problems. As illustrated in Figure~\ref{fig:BACKGROUND}, the decoding procedure relies on the recursive computation of two key variables: \( \delta_t(j) \),\( \psi_t(j) \)~\cite{viterbi_algorithm}:
{\setlength{\abovedisplayskip}{3pt}
 \setlength{\belowdisplayskip}{3pt}
\begin{align*}
        \delta_0(j) = \pi_j \cdot \mathcal{B}_j(x_0), \quad \psi_0(j) = 0 \quad \forall j \in \{1,...,K\} \quad (t=0);
\end{align*}
}
\vspace{-2.5mm}
{\setlength{\abovedisplayskip}{3pt}
 \setlength{\belowdisplayskip}{3pt}
\begin{equation*}
\begin{aligned}
    \delta_t(j) &= \max_{1 \leq i \leq K} \left[ \delta_{t-1}(i) \cdot \mathcal{A}_{ij} \right] \cdot \mathcal{B}_j(x_t) ,\\
    \psi_t(j) &= \underset{1 \leq i \leq K}{\arg\max} \left[ \delta_{t-1}(i) \cdot \mathcal{A}_{ij} \right] \quad (t=1,...,{T-1}).
\end{aligned}
\label{eq3}
\end{equation*}
}
Here, \( \delta_t(j) \) denotes the highest probability of any state sequence that accounts for the first \( t \) observations and ends in state \( j \) at timestep \( t \). The variable \( \psi_t(j) \) stores the index of the previous state that leads to the maximum \( \delta_t(j) \), which corresponds to the selected transition into state \( j \) shown by the solid lines in Figure~\ref{fig:BACKGROUND}, while the dashed lines indicate unselected transitions.
After completing the recursion, the optimal final state \( q^*_{T-1} \) is selected as the state that maximizes \( \delta_{T-1}(j) \). The complete path is then reconstructed by backtracking from \( q^*_{T-1} \) through \( \psi_t(j) \), as illustrated by the yellow solid path, enabling path-wise attribution~\cite{sundararajan2017axiomatic, erion2021improving, yang2023local}.
   {\setlength{\abovedisplayskip}{3pt}
 \setlength{\belowdisplayskip}{3pt}
\begin{align*}
        &q_{T-1}^* = \underset{1 \leq j \leq K}{\arg\max}\ \delta_{T-1}(j), \\
        &q_t^* = \psi_{t+1}(q_{t+1}^*) \quad \text{for } \quad (t={T-2},...,0).
\label{eq4}
\end{align*}
}

The space complexity of the Viterbi algorithm is \( \mathcal{O}(KT) \), as it stores \( \delta_t(j) \) and \( \psi_t(j) \) at each timestep. Its time complexity is \( \mathcal{O}(K^2T) \), due to evaluating all state transitions at each timestep. 
To prevent numerical overflow, \( \delta_t(j) \) and \( \psi_t(j) \) are computed using logarithmic summation rather than direct probability multiplications, as shown in lines 5, 8, 14, 15 of Algorithm~\ref{Alg:FLASH Viterb Decoding Process}.

\section{Overview of \textsc{FLASH Viterbi}}

To efficiently utilize limited resources on edge devices for Viterbi decoding, we propose \textsc{FLASH Viterbi}—a Fast, Lightweight, Adaptable, and Hardware-friendly Viterbi decoding algorithm. It incorporates the following core techniques.

\textit{1) Non-recursive Divide-and-conquer for Task Management.}
\textsc{FLASH Viterbi} pre-generates subtask sets and execution order based on inter-subtask dependencies. It employs a task queue for fine-grained control, reducing space complexity from $\mathcal{O}(KT)$ to $\mathcal{O}(K)$ while avoiding the resource overhead of recursive implementations. Moreover, the resulting execution plan supports parallel decoding of subtasks across threads.
We describe the details in Section~\ref{sec_tech1}.

\textit{2) Pruning and Parallelization for Decoding Efficiency.} 
During subtask decoding, \textsc{FLASH Viterbi} prunes candidate paths of subsequent subtasks based on the optimal state determined from the current subtask, retaining only transitions from that state.
This strategy provides two significant 
benefits: reducing redundant computations and removing dependencies among parallel subtasks. 
With parallelism degree $P$, the overall time complexity is reduced to $\mathcal{O}\left(\frac{K^2T(\log{T}-\log{P})}{P}\right)$. We elaborate our pruning and parallelization strategy in Section~\ref{sec_tech2}.

\textit{3) Dynamic Beam Search Integration.} 
By combining dynamic beam search with \textsc{FLASH Viterbi}, we introduce the variant \textsc{FLASH-BS Viterbi}. It retains 
only the top-$B$ candidate paths during decoding, which 
reduces space complexity to $\mathcal{O}(B)$ and decouples it from state space size $K$, 
\textsc{FLASH-BS Viterbi} offers strong adaptivity to diverse scenarios by configuring beam width.
We also design an efficient data structure to support 
inherent frequent updates. 
We describe the variant \textsc{FLASH-BS Viterbi} in Section~\ref{sec_tech3}.

\textit{4) Hardware-friendly Design.}
The proposed algorithms eliminate recursion and BFS operations from the execution flow, employ a double-buffered memory scheme in memory structures, and support configurable parallelism, making them well-suited for hardware acceleration. We implement FPGA-based accelerators for both \textsc{FLASH Viterbi} and \textsc{FLASH-BS Viterbi}, achieving high throughput with low resource consumption.  
The details of our hardware-accelerated implementation are provided in Section~\ref{sec_tech4}.

\section{\textsc{FLASH Viterbi} Design}

In this section, we present the design of \textsc{FLASH Viterbi} and its variant \textsc{FLASH-BS Viterbi}, both of which are optimized for time and memory efficiency. \textsc{FLASH Viterbi} combines a non-recursive divide-and-conquer strategy with pruning and parallelization techniques to enhance decoding performance and parallel adaptability. The variant \textsc{FLASH-BS Viterbi} integrates dynamic beam search and adapts to available computational resources, making it suitable for deployment on resource-constrained edge devices. In the following discussion, we describe the technical details.

\subsection{Non-recursive Divide-and-conquer Strategy}
\label{sec_tech1}
    \begin{figure}[tb]
		\centerline{\includegraphics[width=1\linewidth]{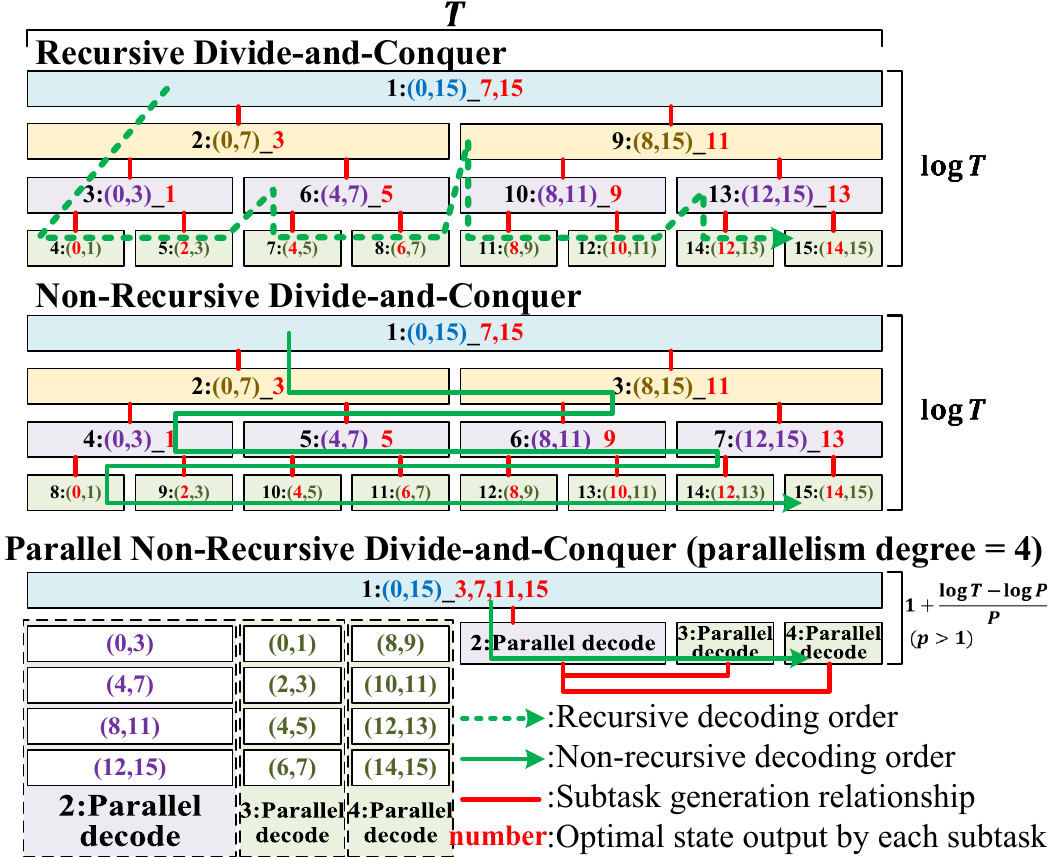}}
		\caption{Comparison of recursive, non-recursive, and parallel non-recursive divide-and-conquer Viterbi decoding. Each rectangle represents a decoding subtask. A subtask labeled $i: (a,b)\_t$ indicates the $i$-th subtask that decodes the segment $(a,b)$ and outputs the optimal path state at timestep~$t$, i.e., $q^*_t$.
        }
		\label{Fig:comparison of recursive}
            \vspace{-2mm}
	\end{figure}

To address the limitations of prior Viterbi algorithms on resource-constrained devices, \textsc{FLASH Viterbi} adopts a non-recursive divide-and-conquer strategy for task decomposition and decoding. This section discusses the limitations of recursive decoding and presents our alternative strategy that enables fine-grained task management and facilitates efficient parallelization.

\subsubsection{Limitations of Recursive Viterbi}
To reduce the memory footprint of the Viterbi algorithm, a common approach in data systems is to decompose the decoding task into multiple subtasks via a multi-stage divide-and-conquer process, where each subtask retains only limited state information from the optimal path for output. After decoding, subtasks are recursively divided into smaller subtasks to identify the remaining optimal states. 
Once all subtasks are decoded, the full optimal path is reconstructed by aggregating their outputs. Existing space-efficient Viterbi variants, such as SIEVE-Mp~\cite{SIEVE}, employ recursive strategy to manage the generate and decoding order of subtasks. However, this recursive formulation has two key limitations: First, the recursive structure inherently constrains parallelism across subtasks. As shown in Figure~\ref{Fig:comparison of recursive}, when the sequence length is 16, the red lines indicate the generation relationships between subtasks. Subtasks of the same color within a layer are derived from their parent subtasks in the previous layer and are independent of each other, theoretically enabling parallel decoding. However, the green dashed lines represent the recursive execution order, which conflicts with the generation dependencies. Specifically, subtasks within the same layer are executed non-consecutively and far apart, hindering parallel execution at that layer. Secondly, recursive decoding requires auxiliary stack space to preserve context from upper recursion levels. The stack depth scales with the sequence length \( T \), introducing additional memory overhead that undermines the objective of minimizing memory usage.

\subsubsection{Our Non-recursive Solution}

To address the limitations of recursive decoding, \textsc{FLASH Viterbi} pre-generates subtasks based on their generation dependencies. Each subtask is represented as a tuple \((t_{\text{start}}, t_{\text{end}})\), where \( t_{\text{start}} \) and \( t_{\text{end}} \) denote the start and end timesteps of a decoding segment, respectively. For a decoding task with sequence length \(T\), the initial tuple is defined as \((0, T{-}1)\). A task queue is employed to manage execution order through dynamic enqueue and dequeue operations, following two key principles: (i) Inter-layer ordering, where a parent subtask is always processed before its child subtasks to ensure the required state information is available to children (e.g., yellow tasks precede purple tasks in Figure~\ref{Fig:comparison of recursive}); 
(ii) Intra-layer parallelism, where subtasks within the same layer exhibit no generation dependencies. Such independence allows them to be decoded in any order to support parallel execution. For serial execution, subtasks are processed sequentially according to their start timesteps $t_{\text{start}}$. Figure~\ref{Fig:comparison of recursive} (middle) illustrates the non-recursive divide-and-conquer strategy employed in \textsc{FLASH Viterbi}. It can be observed that the execution order, denoted by green solid arrows, strictly adheres to the subtask generation hierarchy, without incurring recursive memory overhead.

\setlength{\textfloatsep}{6pt}   
\setlength{\intextsep}{6pt}      
\captionsetup{aboveskip=2pt, belowskip=2pt} 
\SetAlCapSkip{2pt}

\begin{algorithm}[tb]
\footnotesize
\SetInd{0.3em}{0.6em}
\DontPrintSemicolon
\LinesNumbered
\SetCommentSty{mycommentstyle}
\SetKwComment{Comment}{$\triangleright$\ }{}
\caption{Non-Recursive Divide-and-Conquer}
\label{Alg:Non-recursive divide-and-conquer}
\KwIn{Sequence length $T$, parallelism degree $P$, task queue $\mathcal{Q}_{\text{Task}}$}
\KwOut{Decoded sequence}

$TotalTask \leftarrow T - P + 1$, $TaskCount \leftarrow 0$ \Comment*[r]{Init counter} 
Decoding($\mathcal{Q}_{\text{Task}}, 0, T-1, P$) \Comment*[r]{Decode initial subtask}
InitialEnqueue($\mathcal{Q}_{\text{Task}}, 0, T-1, P$) \Comment*[r]{Enqueue $P$ subtasks}
\For{$i \leftarrow 1$ \KwTo $P$ \textbf{in parallel}}{
    \While{\textbf{true}}{
        \While{$TaskCount < TotalTask$ \textbf{and} $\mathcal{Q}_{\text{Task}} = \emptyset$}{
            WaitForWakeUp() \Comment*[r]{No task, wait}
        }
        \If{Dequeue($\mathcal{Q}_{\text{Task}}(m,n)$) $= \emptyset$}{
            ProcessEnd() \Comment*[r]{All tasks done}
        }
        $t_{\text{mid}} \leftarrow \lfloor (m + n)/2 \rfloor$ \;
        $TaskCount \leftarrow TaskCount + 1$ \;
        Decoding($m, n$) \Comment*[r]{Subtask decoding}
        \If{$(n - m) > 2$}{
            Enqueue($\mathcal{Q}_{\text{Task}}, m, t_{\text{mid}}$) \;
            Enqueue($\mathcal{Q}_{\text{Task}}, t_{\text{mid}}+1, n$) \;
        }
        \ElseIf{$(n - m) = 2$}{
            Enqueue($\mathcal{Q}_{\text{Task}}, m, t_{\text{mid}}$) \;
        }
        WakeUpAllThreads() \Comment*[r]{Notify other threads}
    }
}
\end{algorithm}

\subsubsection{Optimization for Parallel Decoding}

To enhance performance, further optimizations are introduced in the parallel implementation of \textsc{FLASH Viterbi}. When the parallelism degree ($P$) is high, using a binary bisection-based divide-and-conquer strategy—starting from the midpoint—may fail to generate a sufficient number of subtasks during early decomposition, resulting in idle threads and underutilized hardware resources. To address this, our implementation performs a $P$-way partition during the initial subtask, dividing the observation sequence into $P$ equal segments generating $P$ subtasks. Subsequent subtasks decoding proceeds via binary bisection. As illustrated in Figure~\ref{Fig:comparison of recursive} (bottom), for a parallelism degree of $P=4$, our approach bypasses the yellow-colored bisection phase and directly partitions the observation sequence into four (purple) subtasks after initial subtask, thereby enabling immediate utilization of all four threads. While this initial $P$-way partitioning introduces a marginal increase in memory overhead compared to pure bisection, the overhead is negligible relative to the memory footprint of later multithreaded processing and does not impact the overall memory usage. This strategy ensures full thread utilization from the outset while maintaining low per-subtask memory overhead.

Algorithm~\ref{Alg:Non-recursive divide-and-conquer} outlines the parallel decoding workflow of \textsc{FLASH Viterbi}. After decoding the initial subtask, $P$ subtasks are enqueued into the task queue and decoded in parallel by $P$ threads. Each thread enqueues new subtasks via binary bisection. Notably, except for the initial task, each decoding subtask outputs the optimal state at its segment’s midpoint $t_{\text{mid}}$, as indicated by red numbers in Figure~\ref{Fig:comparison of recursive}. When the segment length reduces to 3, the right subtask shares the same midpoint as its parent, obviating additional enqueuing.

\subsubsection{Parallelism Analysis under Non-recursive Divide-and-Conquer}

In Figure~\ref{Fig:comparison of recursive}, the width of each rectangle denotes the sequence length of a subtask, while the area approximates its decoding time. Using the full decoding time of a length-$T=16$ sequence (i.e., the blue rectangle) as the decoding cycles, we quantify the temporal consumption across different strategies. The sequential divide-and-conquer Viterbi requires $\log T$ decoding cycles (e.g., 4 cycles for $T=16$) to complete execution. In contrast, the multithreaded variant with a parallelism degree $P=4$ completes the decoding process in only $1 + \frac{\log T - \log P}{P}$ cycles (e.g., 1.5 cycles for $T=16$), illustrating significant parallel acceleration.

Through this optimization, \textsc{FLASH Viterbi} eliminates the inherent limitations of recursive decomposition while establishing the foundation for parallel execution of subtasks.

\subsection{Pruning and Parallelization Strategy}
\label{sec_tech2}
In this section, we detail our pruning and parallelization strategy designed to optimize the computational process of subtask decoding. 
Our goal is to reduce redundant computations and eliminate dependencies among intra-layer subtasks, thereby enabling parallel execution. We also analyze time and space complexity and provide the proof of correctness of the strategy.

\subsubsection{Subtask Decoding and Dependency-Induced Challenges}

To illustrate the necessity of the pruning and parallelization strategy, we begin by describing the subtask decoding procedure in \textsc{FLASH Viterbi} and analyze the computational dependencies across subtasks. Algorithm~\ref{Alg:FLASH Viterb Decoding Process} presents the subtask decoding process of \textsc{FLASH Viterbi} under binary bisection before applying the pruning and parallelization strategy. Figure~\ref{Fig:FLASH Viterbi decoding process} illustrates this process for a sequence length of $T = 8$ and a state space size of $K = 4$.
As shown in Step~2 of Figure~\ref{Fig:FLASH Viterbi decoding process}, before the backtracking phase, the standard Viterbi algorithm retains all state transitions along each candidate path throughout dynamic programming (depicted by dashed lines). In contrast, \textsc{FLASH Viterbi} retains only the transitions between the division point timestep $t_{\text{mid}}$ (i.e., the midpoint for binary bisection) and the terminal timestep (depicted by red solid lines). This selective retention reduces the space complexity from $\mathcal{O}(KT)$ to $\mathcal{O}(K)$.
This optimization is realized through iterative updates of three key arrays:

\begin{itemize}[leftmargin=*,itemsep=2pt,topsep=3pt,parsep=0pt]
\item \textit{OptProb}: Stores the maximum path probability for each state at the current timestep. 
\item \textit{PreState}: Stores the state index at the previous timestep for each maximum path. 
\item \textit{MidState}: Stores the state index at the division point timesteps of each maximum path, as shown by the red lines in Figure~\ref{Fig:FLASH Viterbi decoding process}.
\end{itemize}

As shown in Step~1 of Figure~\ref{Fig:FLASH Viterbi decoding process} (Dynamic Programming), at timestep $t=5$, the \textit{MidState} array is updated by combining the \textit{PreState} information at $t=5$ (brown dashed lines) with the \textit{MidState} values at $t=4$ (purple solid lines), resulting in the updated \textit{MidState} at $t=5$ (red solid lines). 
This iterative update continues throughout the dynamic programming phase, capturing only the transitions at division points and the terminal timestep. Notably, the updates to \textit{PreState} and \textit{MidState} are initiated only after the division point timestep (i.e., $t_{\text{mid}}+1$ under binary bisection).
During backtracking, the optimal path (depicted by the green dashed line) is identified by comparing the \textit{OptProb} values across all $K$ states. Since \textsc{FLASH Viterbi} retains state transitions only at division points and the terminal timestep, it can directly infer optimal states at these specific timesteps (i.e., red-highlighted states in the decoded sequence in Figure~\ref{Fig:FLASH Viterbi decoding process}). The remaining optimal states are recovered through decoding of subsequent subtasks.

During the divide-and-conquer phase (Step 3 in Figure~\ref{Fig:FLASH Viterbi decoding process}), the decoding task $(0,7)$ is partitioned into two subtasks: $(0,3)$ and $(4,7)$. Although these subtasks are structurally independent, the initialization of \textit{OptProb} at timestep 4 in subtask $(4,7)$ relies on the \textit{OptProb} values at timestep 3 from subtask $(0,3)$. This cross-subtask dependency—evident in line 8 of Algorithm~\ref{Alg:FLASH Viterb Decoding Process}—inhibits their parallel execution.

\begin{algorithm}[tb]
\footnotesize
\SetInd{0.3em}{0.6em}
\DontPrintSemicolon
\LinesNumbered
\SetKwComment{Comment}{$\triangleright$\ }{}
\SetCommentSty{mycommentstyle}
\caption{Subtask Decoding (Before Pruning)}
\label{Alg:FLASH Viterb Decoding Process}
\KwIn{Initial state $\boldsymbol{\pi}$, transition matrix $\mathcal{A}$, emission matrix $\mathcal{B}$, sequence $X$, length $T$, start index $m$, end index $n$}
\KwOut{Optimal states $q_{t_{\text{mid}}}^*$ and $q_{T-1}^*$}

$t_{\text{mid}} \leftarrow \lfloor (m + n)/2 \rfloor$ \Comment*[r]{Midpoint of segment}
\textcolor{mygreen}{\ttfamily{/* Initialization phase */}}

\If{$m = 0$}{
    \For{$i \leftarrow 1$ \KwTo $K$}{
        $\textit{OptProb}[i] \leftarrow \log(\pi_i) + \log(\mathcal{B}_{i,x_0})$ \Comment*[r]{Base case init}
    }
}
\Else{
    \For{$i \leftarrow 1$ \KwTo $K$}{
        $\textit{OptProb}[i] \leftarrow \max_k (\textit{OptProb}[k] + \log(\mathcal{A}_{k,i}) + \log(\mathcal{B}_{i,x_m}))$ \;
    }
}

\For{$i \leftarrow 1$ \KwTo $K$}{
    $\textit{PreState}[i] \leftarrow -1$, $\textit{MidState}[i] \leftarrow -1$ \Comment*[r]{Reset trace}
}
\textcolor{mygreen}{\ttfamily{/* Dynamic programming loop */}}

\For{$t \leftarrow m + 1$ \KwTo $n$}{
    \For{$i \leftarrow 1$ \KwTo $K$}{
        $\textit{OptProb}[i] \leftarrow \max_k (\textit{OptProb}[k] + \log(\mathcal{A}_{k,i}) + \log(\mathcal{B}_{i,x_t}))$ \;
        $\textit{PreState}[i] \leftarrow \arg\max_k (\textit{OptProb}[k] + \log(\mathcal{A}_{k,i}) + \log(\mathcal{B}_{i,x_t}))$ \;        
        \If{$t = t_{\text{mid}} + 1$}{
            $\textit{MidState}[i] \leftarrow \textit{PreState}[i]$ \;
        }
        \ElseIf{$t > t_{\text{mid}} + 1$}{
            $\textit{MidState}[i] \leftarrow \textit{MidState}[\textit{PreState}[i]]$ \;
        }
    }
}
\textcolor{mygreen}{\ttfamily{/* Backtracking phase */}}

\If{$t_{\text{mid}} = \lfloor (T - 1)/2 \rfloor$}{
    $q_{T - 1}^* \leftarrow \arg\max_k (\textit{OptProb}[k])$ \;
    $q_{t_{\text{mid}}}^* \leftarrow \textit{MidState}[q_{T - 1}^*]$ \;
}
\Else{
    $q_{t_{\text{mid}}}^* \leftarrow \textit{MidState}[q_n^*]$ \;
}
\end{algorithm}

\begin{figure}[tb]
		\centerline{\includegraphics[width=1\linewidth]{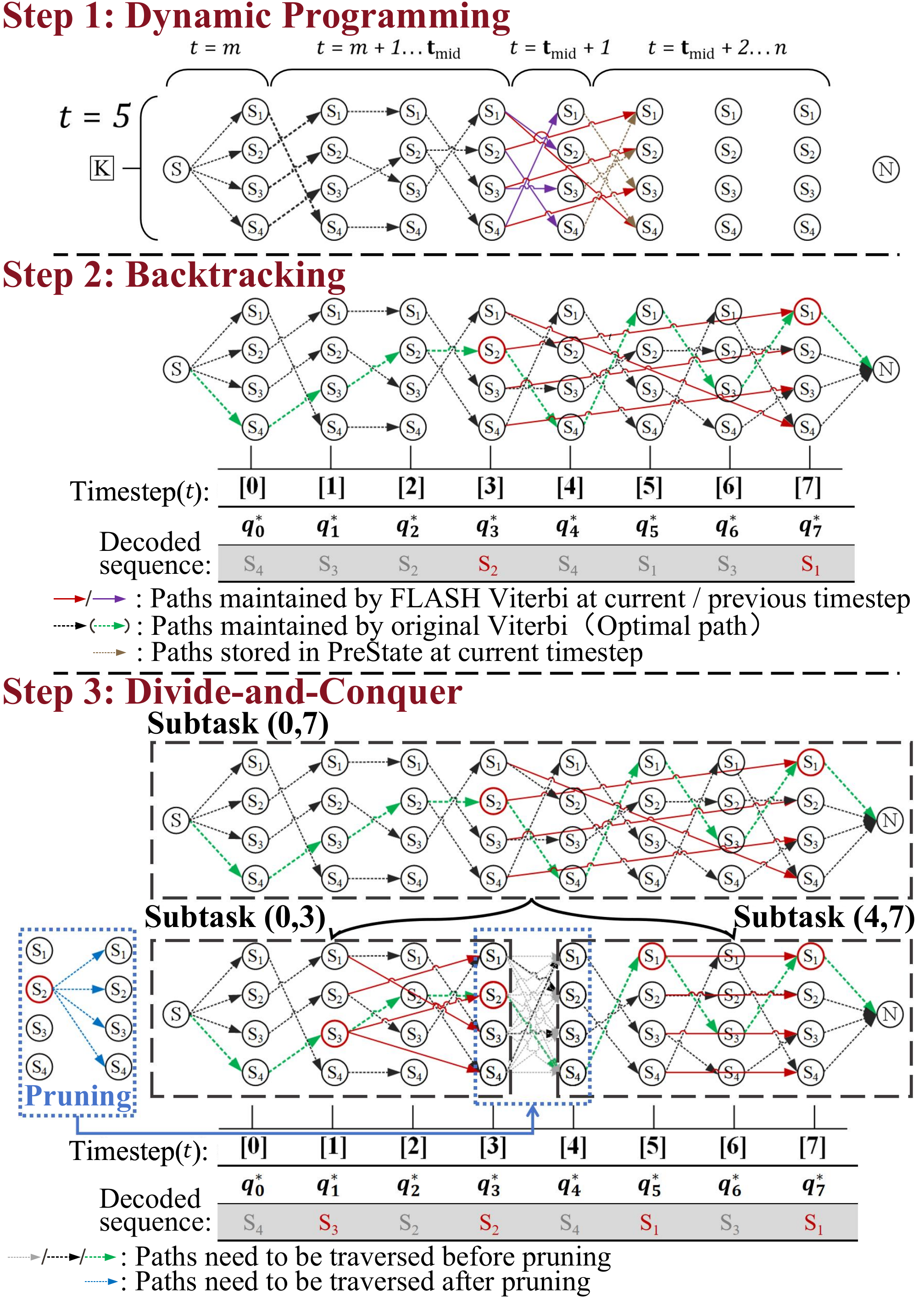}}
		\caption{\textsc{FLASH Viterbi} subtask decoding process.}
		\label{Fig:FLASH Viterbi decoding process}
	\end{figure}

\subsubsection{Pruning for Parallel Decoding}
To eliminate cross-subtask dependencies and enable parallel decoding, we propose the following pruning strategy: when solving a subtask starting at the division point timestep of the previous divided layer, only the transition paths originating from the optimal state at that timestep are retained, and their corresponding \textit{OptProb} values are re-initialized. This approach is illustrated in Step 3 of Figure~\ref{Fig:FLASH Viterbi decoding process}, highlighted by the two blue dashed boxes.

In the right dashed box, the dashed lines depict all transition paths that need to be traversed at timestep $t{=}3$ prior to pruning. Among them, black and green lines indicate maximum-probability transitions for each state. In contrast, the blue dashed lines in the left box depict the pruned transition paths, where only transitions leading to the optimal state \( S_2 \) at \( t{=}3 \) are retained. The number of transition computations eliminated through pruning is given by $(K{-}1)K\sum_{i=1}^{\log T - 1} \frac{T}{2^i}$, which matches the overall time complexity of the full decoding process, i.e., $\mathcal{O}(K^2 T \log T)$. This highlights the pruning strategy's effectiveness in eliminating redundant computations.
The rationale behind this strategy is that subsequent subtasks primarily serve to reconstruct the optimal path. Thus, it is sufficient to preserve only those transitions along the optimal path from the original decoding task and ensure their corresponding probabilities remain maximal within each subtask. Transitions associated with non-optimal candidate paths do not need to maintain consistency with the original global decoding.
Taking Step~3 of Figure~\ref{Fig:FLASH Viterbi decoding process} as an example, the maximum-probability paths for each state in the right dashed box exhibit probabilities no lower than those of the retained paths in the left box after pruning. Notably, the optimal path—originally the highest-probability path in the right box—remains optimal in the pruned subset shown in the left box. Therefore, this pruning strategy does not alter the final decoding result. To simplify computations, we set the accumulated path probability of the optimal state at the initial timestep to 1 (e.g., state \( S_2 \) at timestep \( t=3 \) in Step~3 of Figure~\ref{Fig:FLASH Viterbi decoding process}). 
By applying logarithmic transformation, the dependency on the previous timestep’s \textit{OptProb} can be eliminated in the path probability equation, enabling parallel subtask execution. 
Accordingly, line~8 of Algorithm~\ref{Alg:FLASH Viterb Decoding Process} is modified to: 
\[
\textit{OptProb}[i] = \log(\mathcal{A}_{q_{m-1}^*,i}) + \log(\mathcal{B}_{i,x_m}).
\]
With a parallelism degree of \( P \), this optimization reduces the time complexity from \( \mathcal{O}(K^2T\log{T}) \) to \( \mathcal{O}\left(K^2T\frac{(\log{T} - \log{P})}{P}\right) \), 
and the space complexity to \( \mathcal{O}(PK) \).

Additionally, the initial probability computation for a subtask starting at a division point is structurally analogous to that of a subtask starting at timestep 0. In both cases, decoding is initiated from a single state (e.g., state \( S_1 \) for subtask \((0,3)\) and state \( S_2 \) for subtask \((4,7)\) in Figure~\ref{Fig:FLASH Viterbi decoding process}). This structural similarity enables the use of a unified hardware architecture for all subtasks during edge-side deployment, thereby improving hardware efficiency and resource utilization.

\subsubsection{Correctness of Pruning and Parallelization Strategy}
The correctness of the divide-and-conquer formulation for the Viterbi algorithm has been proven in~\cite{SIEVE}. This section therefore focuses on proving the correctness of the proposed pruning and parallelization strategy. We first formally define the valid path set and the optimal path as follows.

\noindent
\begin{definition}[Valid path set]
Given a divide-and-conquer subtask $(m, n)$ with a state space of size $K$ and $m \neq 0$, the valid path set $\mathcal{Y}$ is defined as $\mathcal{Y} = \{Q_1, Q_2, \dots, Q_y\}$, representing all feasible paths from timestep $m{-}1$ to $n$, with corresponding path probabilities $\mathcal{P} = \{p_1, p_2, \dots, p_y\}$. The cardinality $y$ satisfies $1 \leq y \leq K^{n - m + 2}$, depending on the sparsity of the transition matrix $\mathcal{A}$.
\end{definition}

\begin{definition}[Optimal path]
The optimal path $Q^*$ under the divide-and-conquer strategy is defined as $Q^* = [q_{m-1}^*, q_m^*, \dots, q_n^*]$, which maximizes the path probability, i.e., $p^* = \max(\mathcal{P})$.
\end{definition}

For clarity, we define the following two variants of the pruning strategy used in the proofs.

\begin{definition}[Retained-pruning variant]
The retained-pruning variant restricts the path set to those sharing the same $(m{-}1)$-th state $q_{m-1}$ as the optimal path $Q^*$. It defines a subset $\mathcal{Y}' = \{Q \in \mathcal{Y} \mid q_{m-1} = q_{m-1}^*\}$ with path probabilities $\mathcal{P}' = \{p_1', p_2', \dots\}$. The optimal path 
is $Q' = [q_{m-1}', q_m', \dots, q_n']$, which maximizes the path probability, i.e., $p' = \max(\mathcal{P}')$.
\end{definition}

\begin{definition}[Complete-pruning variant]
The complete-pruning variant extends the retained-pruning variant by 
assuming $p_{m-1}^* = 1$. Decoding is performed using $\mathcal{Y}'' = \mathcal{Y}'$ with path probabilities $\mathcal{P}'' = \{p_1'', p_2'', \dots\}$. The optimal path 
is $Q'' = [q_{m-1}'', q_m'', \dots, q_n'']$, which maximizes the path probability, i.e., $p'' = \max(\mathcal{P}'')$.
\end{definition}

\begin{theorem}
\label{theo_1}
The optimal path $Q^*$ belongs to the retained-pruning path set, i.e., $Q^* \in \mathcal{Y}'$.
\end{theorem}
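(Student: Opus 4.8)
The plan is to show that restricting attention to paths that pass through the optimal $(m{-}1)$-th state $q^*_{m-1}$ does not discard the globally optimal path $Q^*$. This is almost immediate from the definition of $Q^*$ itself: by the definition of the optimal path under the divide-and-conquer strategy, $Q^* = [q^*_{m-1}, q^*_m, \dots, q^*_n]$, so its $(m{-}1)$-th coordinate is exactly $q^*_{m-1}$. Hence $Q^*$ satisfies the membership condition $q_{m-1} = q^*_{m-1}$ that defines $\mathcal{Y}'$, and since $Q^* \in \mathcal{Y}$ by definition, we conclude $Q^* \in \mathcal{Y}' = \{Q \in \mathcal{Y} \mid q_{m-1} = q^*_{m-1}\}$.

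First I would state precisely what $q^*_{m-1}$ refers to: it is the state at timestep $m{-}1$ along the optimal path of the \emph{parent} (previously divided) subtask, which by the correctness of the divide-and-conquer decomposition (proven in \cite{SIEVE}) coincides with the state at timestep $m{-}1$ on the global optimal path. I would make explicit that the valid path set $\mathcal{Y}$ for subtask $(m,n)$ consists of all feasible paths from timestep $m{-}1$ to $n$, so that $q^*_{m-1}$ — being the terminal state of the parent segment and the initial state of this segment — is a well-defined anchor shared by both. Then I would observe that $Q^*$, restricted to timesteps $m{-}1$ through $n$, is by construction a feasible path in $\mathcal{Y}$ whose first coordinate is $q^*_{m-1}$.

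The only subtlety worth addressing is the consistency of notation: the subtask-level optimal path $Q^*$ in the Definition of ``Optimal path'' is defined as the maximizer of $\mathcal{P}$ over $\mathcal{Y}$, and one must verify that this subtask-level optimum indeed has $q^*_{m-1}$ as its $(m{-}1)$-th state rather than some other state. I would resolve this by appealing to the divide-and-conquer correctness result: the decomposition guarantees that the concatenation of per-subtask optimal paths reconstructs the global optimal path, which forces the boundary state $q^*_{m-1}$ to agree between adjacent subtasks. Once this alignment is noted, membership $Q^* \in \mathcal{Y}'$ follows directly from the set-builder definition of $\mathcal{Y}'$.

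I do not anticipate a genuine obstacle here — the statement is essentially a definitional unfolding, and the ``hard part'' is merely pinning down which $Q^*$ and which $q^*_{m-1}$ are meant so that the two definitions line up. The substantive work (that pruning preserves the \emph{value} of the optimum, and that setting $p^*_{m-1}=1$ via the logarithmic reinitialization does not change the argmax) is deferred to subsequent theorems about the retained-pruning and complete-pruning variants; Theorem~\ref{theo_1} only needs the membership claim, which the above argument establishes in a few lines.
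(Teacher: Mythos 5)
Your argument matches the paper's proof exactly: both are the one-line definitional unfolding that $Q^*$'s $(m{-}1)$-th state is $q^*_{m-1}$ by definition, so $Q^*$ satisfies the membership condition defining $\mathcal{Y}'$. The extra discussion about aligning the subtask-level optimum with the parent's boundary state is harmless context but not needed, since the paper's Definitions~2 and~3 already define $\mathcal{Y}'$ directly in terms of $Q^*$'s own $(m{-}1)$-th state.
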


\begin{proof}
From Definition~{3}, we have $\mathcal{Y}' = \{Q \in \mathcal{Y} \mid q_{m-1} = q_{m-1}^*\}$.  
Since $Q^* = [q_{m-1}^*, q_m^*, \dots, q_n^*]$ satisfies $q_{m-1} = q_{m-1}^*$, it follows that $Q^* \in \mathcal{Y}'$. Moreover, $p^* \in \mathcal{P}'$.
\end{proof}

\begin{theorem}
\label{theo_2}
The optimal path from the retained-pruning variant equals the one from divide-and-conquer, i.e., $Q' = Q^*$.
\end{theorem}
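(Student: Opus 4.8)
The plan is to show the two optimization problems---maximizing over $\mathcal{Y}$ versus maximizing over the restricted set $\mathcal{Y}'$---have the same maximizer. By Theorem~\ref{theo_1} we already know $Q^* \in \mathcal{Y}'$, so $Q^*$ is a feasible candidate for the retained-pruning problem; it remains to argue that no other path in $\mathcal{Y}'$ can beat it. Since $\mathcal{Y}' \subseteq \mathcal{Y}$, every path $Q \in \mathcal{Y}'$ also lies in $\mathcal{Y}$ and therefore has $p(Q) \le p^* = p(Q^*)$ by the definition of the optimal path. Hence $p^* = \max(\mathcal{P}) \ge \max(\mathcal{P}') \ge p^*$, forcing $\max(\mathcal{P}') = p^*$, and the maximizer $Q'$ over $\mathcal{Y}'$ attains this value.

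The remaining subtlety is uniqueness: $Q' = Q^*$ as \emph{paths}, not merely as probability values. First I would handle the generic case where the optimal path probability is attained by a unique path in $\mathcal{Y}$; then since $Q^* \in \mathcal{Y}'$ attains $p^*$ and is the only path in the larger set $\mathcal{Y}$ with that probability, it is a fortiori the only maximizer inside $\mathcal{Y}'$, so $Q' = Q^*$. If the paper permits ties (multiple global optima with equal probability), I would instead argue that the divide-and-conquer procedure and the retained-pruning procedure both employ the same deterministic tie-breaking rule in their $\arg\max$ steps (e.g., smallest state index, as in lines~8 and~14--15 of Algorithm~\ref{Alg:FLASH Viterb Decoding Process}), so they select the same representative path; this is a routine but necessary observation to make the equality of paths---rather than just values---precise.

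The main obstacle I anticipate is not the inequality chain, which is immediate from set inclusion, but rather being careful about what ``equals'' means here and ensuring the argument is airtight about path identity under the dynamic-programming recursion. Concretely, I would verify that restricting to $\mathcal{Y}'$ does not change the intermediate $\arg\max$ decisions made at timesteps $m, m{+}1, \dots, n$ along $Q^*$: because $Q^*$ is globally optimal, each of its prefixes is itself the optimal prefix ending in the corresponding state (the standard optimal-substructure property of Viterbi), and pruning only removes paths through non-optimal states at timestep $m{-}1$, none of which lie on $Q^*$. Thus the retained-pruning DP reconstructs exactly $Q^*$. I would also note explicitly that Theorem~\ref{theo_2} says nothing yet about the probability \emph{values} matching the original global scale---that rescaling is precisely what the complete-pruning variant (Definition~5) addresses---so the proof of Theorem~\ref{theo_2} should stay at the level of which path is selected, deferring the $p_{m-1}^* = 1$ normalization to the subsequent result.
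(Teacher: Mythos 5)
Your proposal is correct and follows essentially the same route as the paper: using Theorem~\ref{theo_1} together with $\mathcal{Y}' \subseteq \mathcal{Y}$ to force $\max(\mathcal{P}') = p^*$ and hence $Q' = Q^*$. The additional care you take about uniqueness and tie-breaking in the $\arg\max$ is a refinement the paper's own (terser) proof simply glosses over, not a different approach.
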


\begin{proof}
Since $\mathcal{Y}' \subseteq \mathcal{Y}$ and $p' = \max(\mathcal{P}')$, and from Theorem~{1}, we have $p^* \in \mathcal{P}'$, then $p' = p^*$. Thus, the optimal paths are the same: $Q' = Q^*$.
\end{proof}

\begin{theorem}
\label{theo_3}
The optimal path from complete-pruning variant equals the one from divide-and-conquer, i.e., $Q'' = Q^*$.
\end{theorem}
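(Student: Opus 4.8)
The plan is to show that rescaling the probability of the single retained starting state $q_{m-1}^*$ to $1$ is a uniform multiplicative reweighting of every path in $\mathcal{Y}'$, and that such a reweighting preserves the $\arg\max$ even though it changes the numeric value of the maximum. First I would recall from Definition~5 that the complete-pruning variant uses exactly the same path set as the retained-pruning variant, $\mathcal{Y}'' = \mathcal{Y}'$, so the only difference between the two is the assignment of the initial probability mass at timestep $m-1$: the retained-pruning variant inherits $p_{m-1}^*$ from the global decoding, whereas the complete-pruning variant sets $p_{m-1}^* = 1$.

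Next I would make the reweighting explicit. Every path $Q \in \mathcal{Y}'$ starts at the common state $q_{m-1}^*$, so its probability factorizes as $p_j = p_{m-1}^* \cdot r_j$, where $r_j = \prod_{t=m}^{n} \mathcal{A}_{q_{t-1} q_t} \mathcal{B}_{q_t}(x_t)$ is the ``suffix weight'' of the path from timestep $m-1$ onward and does not depend on the value assigned to $p_{m-1}^*$. Under the complete-pruning assumption $p_{m-1}^* = 1$, the corresponding probability is simply $p_j'' = r_j$. Hence $p_j = p_{m-1}^* \cdot p_j''$ for all $j$, i.e., $\mathcal{P}'$ is obtained from $\mathcal{P}''$ by multiplying every entry by the positive constant $p_{m-1}^*$ (positive because $q_{m-1}^*$ lies on the optimal path and therefore has nonzero probability). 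Since multiplying a finite set of reals by a fixed positive scalar does not change which element is the maximizer, $\arg\max_j p_j'' = \arg\max_j p_j'$, so $Q'' = Q'$. Finally I would chain this with Theorem~\ref{theo_2}, which already gives $Q' = Q^*$, to conclude $Q'' = Q^*$.

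For rigor I would also note that in the actual implementation the path probabilities are manipulated in log-space, so the multiplicative rescaling by $p_{m-1}^*$ becomes an additive shift by $\log p_{m-1}^*$ applied uniformly to every path score; an additive constant likewise preserves the $\arg\max$, so the log-domain computation (the modified line~8 of Algorithm~\ref{Alg:FLASH Viterb Decoding Process}) is consistent with the probability-domain argument. I would phrase the proof in the multiplicative domain for clarity and add one sentence remarking on the log-domain equivalence.

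I do not expect a serious obstacle here: the statement is essentially the observation that the Viterbi $\arg\max$ is invariant under a common positive scaling of all competing path probabilities, combined with the already-established Theorems~\ref{theo_1} and~\ref{theo_2}. The only point requiring a little care is justifying that $p_{m-1}^* > 0$ (so that division/scaling is legitimate and no path in $\mathcal{Y}'$ is spuriously zeroed out) and that $\mathcal{Y}'$ is nonempty — both of which follow immediately from the fact that $q_{m-1}^*$ is a state on the globally optimal path and hence reachable with positive probability, together with Theorem~\ref{theo_1} which guarantees $Q^* \in \mathcal{Y}'$.
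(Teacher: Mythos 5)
Your proposal is correct and follows essentially the same argument as the paper: the complete-pruning variant differs from the retained-pruning variant only by the uniform factor $p_{m-1}^*$ (an additive $\log(p_{m-1}^*)$ in log-space), which cannot change the $\arg\max$, so $Q'' = Q'$, and Theorem~\ref{theo_2} then gives $Q'' = Q^*$. The only cosmetic difference is that you argue over full path probabilities in the multiplicative domain while the paper states the same invariance directly on the log-domain \textit{OptProb} update at timestep $m$ and the resulting \textit{MidState} arrays; your added remarks on $p_{m-1}^* > 0$ and nonemptiness of $\mathcal{Y}'$ are sound but not needed beyond what the paper assumes.
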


\begin{proof}
From Definitions~{3} and~{4}, the update rule in retained-pruning at time $m$ is $\textit{OptProb}'[i] = \log(\mathcal{A}_{q_{m-1}^*, i}) + \log(\mathcal{B}_{i, x_m}) + \log(p_{m-1}^*)$,
while in complete-pruning it becomes $\textit{OptProb}''[i] = \log(\mathcal{A}_{q_{m-1}^*, i}) + \log(\mathcal{B}_{i, x_m})$.
Since $\log(p_{m-1}^*)$ is constant w.r.t. $i$, it does not affect the $\arg\max$ operation. Therefore, $\textit{MidState}'' = \textit{MidState}'$. From Algorithm \ref{Alg:FLASH Viterb Decoding Process}, the optimal states are maintained based on \textit{MidState}. Hence $Q'' = Q'$, and by Theorem~{2}, $Q'' = Q^*$.
\end{proof}

These theorems together confirm that the pruning and parallelization strategy preserves the correctness of the divide-and-conquer approach, yielding the same optimal path.
Overall, the pruning and parallelization strategy enhances HMM decoding in three key aspects:  
1) Reducing redundant computations without sacrificing memory efficiency;  
2) Removing inter-subtask dependencies to enable parallel decoding;  
3) Unifying subtask computational structures, enhancing resource utilization.

\subsection{\textsc{FLASH Viterbi} with Beam Search}
\label{sec_tech3}

To address memory constraints in standard \textsc{FLASH Viterbi} and further enhance its adaptability, we propose \textsc{FLASH-BS Viterbi}—a dynamic beam search variant. This section first introduces the key idea, then details an efficient heap-based mechanism to reduce computational overhead, and finally analyzes the resulting complexity.

\subsubsection{Static vs. Dynamic Beam Search}

Prior works have integrated beam search with divide-and-conquer Viterbi algorithm. However, these methods typically adopt static beam selection, wherein the path probabilities for all candidate states are computed and stored before selecting the top-$B$ candidates.
This static beam search suffers from limited memory optimization because it still requires storing the state information of the remaining \( K-B \) states before discarding them. 
In contrast, \textsc{FLASH-BS Viterbi} employs dynamic beam search, which incrementally maintains only the top-$B$ states during the process of calculating path probabilities. In this way, \textsc{FLASH-BS Viterbi} eliminates the need to store full intermediate results for the remaining $K-B$ states.

\subsubsection{Efficient Heap-Based Candidate Maintenance}

To mitigate the overhead caused by frequent comparisons and selective replacements in dynamic candidate maintenance, we introduce an efficient data structure based on two min-heaps for optimal state management, as illustrated in Figure~\ref{Fig:Efficient Heap-Based Storage Structure}.

    \begin{figure}
		\centerline{\includegraphics[width=1\linewidth]{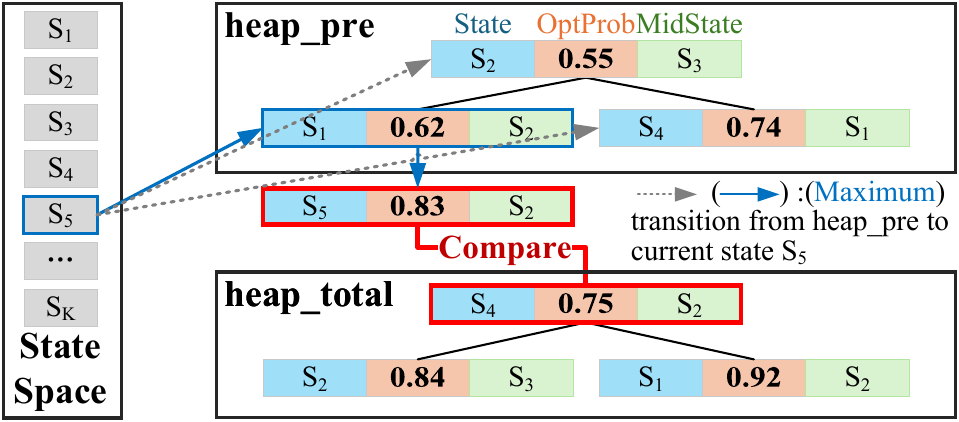}}
		\caption{Efficient data structure for dynamic beam search.}
	\label{Fig:Efficient Heap-Based Storage Structure}
	\end{figure}
The algorithm employs two min-heaps: (i) $\textit{heap\_pre}$, which retains the top \( B \) candidate states from the previous timestep, and (ii) $\textit{heap\_total}$, which incrementally updates the top \( B \) candidates for the current timestep.
Each heap maintains three core components: (i) $\textit{State}$, which records the indices of the top-$B$ candidate states; (ii) $\textit{OptProb}$, which stores their corresponding path probabilities and serves as the heap’s sorting key; (iii) $\textit{MidState}$, which tracks the state indices at division points timestep along the maximum paths, consistent with the \textsc{FLASH Viterbi} definition.
To avoid data copying overhead, two physically separate heap buffers are allocated and managed using a double-buffering scheme. The roles of $\textit{heap\_pre}$ and $\textit{heap\_total}$ alternate across timesteps, enabling efficient in-place updates and streamlined memory access.

The heap update procedure is depicted in Figure~\ref{Fig:Efficient Heap-Based Storage Structure}. During decoding, \textsc{FLASH-BS Viterbi} evaluates only transitions from the \( B \) candidate states in $\textit{heap\_pre}$ to the current state, generating a corresponding heap element for each. The update logic for $\textit{heap\_total}$ proceeds as follows: 1) If fewer than \( B - 1 \) elements are stored, the new element is inserted directly. 2) If exactly \( B - 1 \) elements are present, the new element is inserted and the heap is heapified. 3) If \( B \) elements are already stored, the new element is compared against the current heap root (i.e., the candidate with the lowest \textit{OptProb}). It is inserted only if its \textit{OptProb} is higher, thereby maintaining the top-\( B \) candidates.

\subsubsection{Analysis of Efficiency and Adaptability}
Leveraging the efficient memory design in dynamic beam search, \textsc{FLASH-BS Viterbi} significantly reduces memory consumption and enhancing adaptability. Given a parallelism degree \(P\) and beam width \(B\), the time complexity is reduced to \(\mathcal{O}\left(BKT \cdot \frac{(\log{T} - \log{P})}{P}\right)\), as each state only computed transitions from the top \(B\) candidate paths at the previous timestep. Correspondingly, the space complexity is reduced to 
$\mathcal{O}(PB)$, since only \(B\) paths per parallel unit need to be maintained. The two tunable parameters—parallelism degree \(P\) and beam width \(B\)—demonstrate the multidimensional adaptability of \textsc{FLASH-BS Viterbi}, enabling fine-grained trade-offs between decoding latency and memory consumption.
\section{Hardware Implementation\label{sec_tech4}}

In this section, we present the hardware-accelerated implementation of \textsc{FLASH-BS Viterbi}. 
Compared with existing space-efficient Viterbi algorithms, \textsc{FLASH Viterbi} and \textsc{FLASH-BS Viterbi} exhibit hardware-friendly properties, including non-recursive structures, the elimination of BFS traversal, and double-buffered memory schemes. These features enable efficient parallelization and low-overhead deployment on edge devices. The following describes the accelerator architecture and its FPGA implementation.

\begin{figure}[t!]
    \centering
    \includegraphics[width=1\columnwidth]{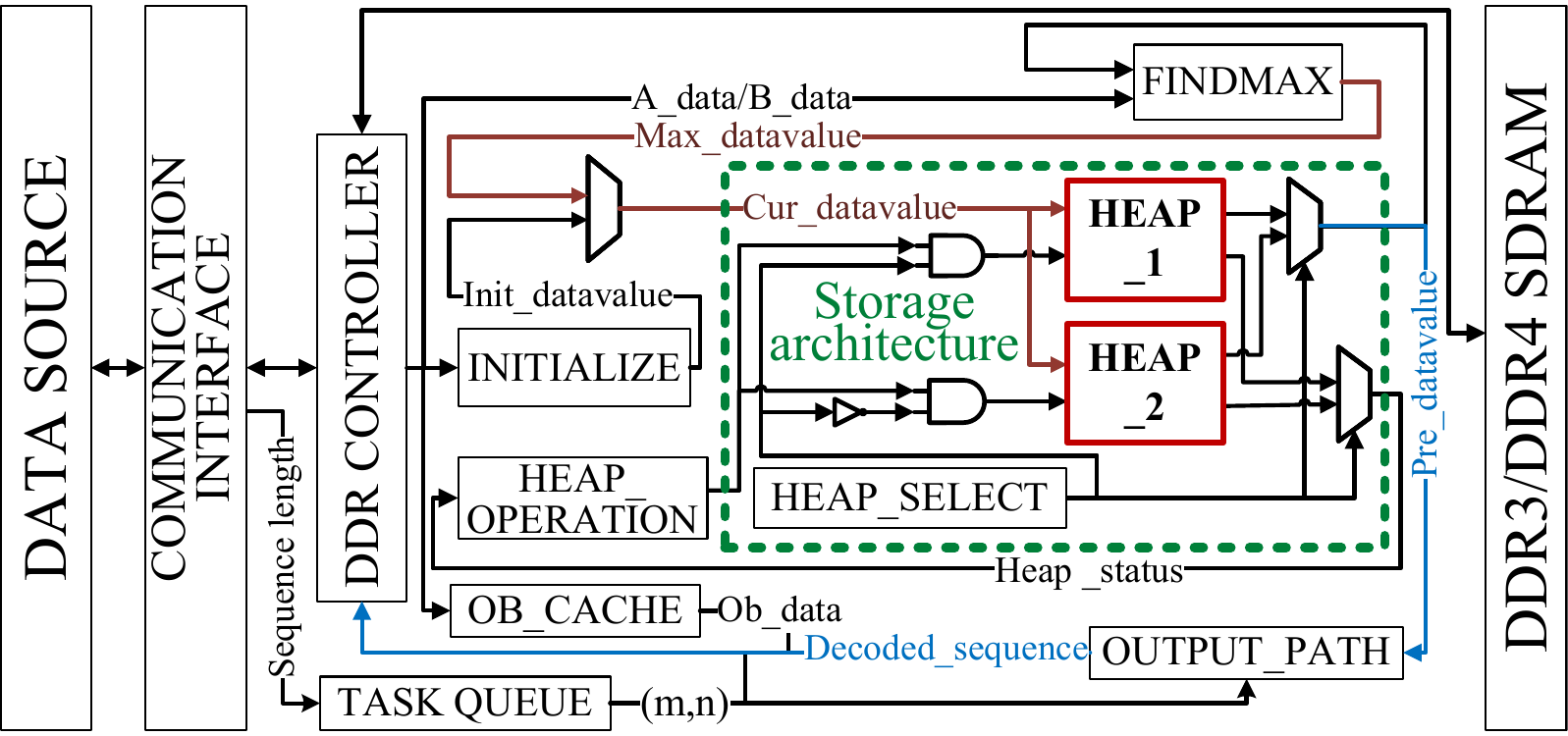}
    \caption{Architecture of the \textsc{FLASH-BS Viterbi} accelerator.}
    \label{fig:FLASH_Decoding_unit}
\end{figure}

\subsection{FPGA-Based Accelerator Architecture}

Figure~\ref{fig:FLASH_Decoding_unit} shows the \textsc{FLASH-BS Viterbi} accelerator architecture. After loading HMM data into DDR memory, decoding is initiated on the FPGA.
A FIFO-based \textit{TASK QUEUE} generates subtask tuples based on the sequence length and sequentially dispatches them to the \textsc{DDR CONTROLLER} to fetch the corresponding observation segments from DDR memory. Each segment is cached in the \textit{OB\_CACHE} in decoding order prior to execution.
During subtask initialization, the relevant data is fetched from DDR memory to perform the initial path probability computation (i.e., \textit{Init\_datavalue} in Figure~\ref{fig:FLASH_Decoding_unit}).
For subsequent timesteps, the \textsc{FINDMAX} unit performs dynamic programming using HMM data from DDR memory and intermediate results (\textit{Pre\_datavalue}) stored in the storage units. The updated results are written back, as shown by the red paths in Figure~\ref{fig:FLASH_Decoding_unit}. In the final backtracking phase, the storage units write the traced state sequence to the \textit{OUTPUT\_PATH}. After all subtasks are processed, the complete \textit{Decoded\_sequence} is written to DDR memory, finalizing the decoding process (blue paths in Figure~\ref{fig:FLASH_Decoding_unit}).

\begin{table*}[htbp]
    \centering
    \setlength{\tabcolsep}{2pt}
    \renewcommand{\arraystretch}{1.05}
    \caption{Overall comparison of time and space for \textsc{FLASH} variants against baseline Viterbi algorithms, for both C and Python implementations. Speedups and memory ratios of our proposed algorithms over other algorithms are also reported.
    }
    \label{tab:overall-comparison}
    \vspace{1mm}
    \resizebox{\textwidth}{!}{%
    \begin{tabular}{|c|c|c|ccc|ccc|ccc|ccc|cc|}
    \hline
    \multirow{2}{*}{\textbf{Class}} & \multirow{2}{*}{\textbf{Algorithm}} & \multirow{2}{*}{\textbf{Impl.}}
    & \multicolumn{3}{c|}{\textbf{Decoding time (s)}} 
    & \multicolumn{3}{c|}{\textbf{Speedup}} 
    & \multicolumn{3}{c|}{\textbf{Memory usage ($\times 10^3$ B)}} 
    & \multicolumn{3}{c|}{\textbf{Memory ratio}}
    & \multicolumn{2}{c|}{\textbf{Py/C ratio}} \\
    \cline{4-17}
     &  &  & Seq. & Par.7 & Par.16 & Seq. & Par.7 & Par.16 & Seq. & Par.7 & Par.16 & Seq. & Par.7 & Par.16 & Time & Mem \\
    \hline
    \multirow{5}{*}{\shortstack[l]{Beam search\\integration\\(B=128)}}
      & \multirow{2}{*}{\centering SIEVE-BS} & C  & 49.4 & -- & -- & 3.5 & 14.3 & 18.3 & 2899.4 & -- & -- & 901.5 & 132.8 & 58.2 & -- & -- \\
      &                                     & Py & 208.5 & -- & -- & 14.7 & 60.2 & 77.4 & 22928.1 & -- & -- & 7129.4 & 1049.8 & 460.6 & 4.2 & 7.9 \\
      \cline{2-17}
      & \multirow{2}{*}{\centering SIEVE-BS-Mp} & C  & 38.0 & -- & -- & 2.7 & 11.0 & 14.1 & 2503.3 & -- & -- & 778.4 & 114.6 & 50.3 & -- & -- \\
      &                                        & Py & 183.1 & -- & -- & 12.9 & 52.9 & 68.0 & 20808.8 & -- & -- & 6470.4 & 952.8 & 418.0 & 4.8 & 8.3 \\
      \cline{2-17}
      & \textsc{FLASH-BS} & C  & 14.2 & 3.5 & 2.7 & 1.0 & 1.0 & 1.0 & 3.2 & 21.8 & 49.8 & 1.0 & 1.0 & 1.0 & -- & -- \\
    \hline
    \multirow{7}{*}{\shortstack[l]{Without\\beam search}}
      & Vanilla & C  & 104.4 & -- & -- & 0.3 & 1.1 & 1.5 & 8120.3 & -- & -- & 127.8 & 18.3 & 8.0 & -- & -- \\
      & Viterbi & Py & 151.7 & -- & -- & 0.4 & 1.6 & 2.1 & 32549.3 & -- & -- & 512.1 & 73.3 & 32.1 & 1.5 & 4.0 \\
      \cline{2-17}
      & Checkpoint & C  & 216.0 & -- & -- & 0.6 & 2.3 & 3.0 & 824.8 & -- & -- & 13.0 & 1.9 & 0.8 & -- & -- \\
      & Viterbi    & Py & 374.5 & -- & -- & 1.0 & 4.0 & 5.2 & 3271.8 & -- & -- & 51.5 & 7.4 & 3.2 & 1.7 & 4.0 \\
      \cline{2-17}
      & \multirow{2}{*}{\centering SIEVE-Mp} & C  & 672.6 & -- & -- & 1.7 & 7.2 & 9.4 & 775.0 & -- & -- & 12.2 & 1.7 & 0.8 & -- & -- \\
      &                                     & Py & 1679.8 & -- & -- & 4.4 & 17.9 & 23.4 & 127554.1 & -- & -- & 2006.8 & 287.1 & 125.6 & 2.5 & 164.6 \\
      \cline{2-17}
      & \textsc{FLASH} & C  & 385.9 & 93.6 & 71.7 & 1.0 & 1.0 & 1.0 & 63.6 & 444.2 & 1015.3 & 1.0 & 1.0 & 1.0 & -- & -- \\
    \hline
    \end{tabular}%
    }
    \vspace{-4mm}
\end{table*}

\subsection{Memory Optimization}
To eliminate the substantial time and resource overhead caused by data transfer operations, the storage architecture employs a double-buffered memory scheme. 
In the \textsc{FLASH-BS Viterbi} accelerator, two BRAM-based min-heap structures, \textit{HEAP\_1} and \textit{HEAP\_2}, dynamically alternate roles between \textit{Heap\_total} and \textit{Heap\_pre}, as described in Section~\ref{sec_tech3}. 
These heap structures perform search and update operations based on instructions issued by the \textsc{HEAP\_OPERATION} module.

\subsection{Parallelization and Pipelining Optimization}

To accelerate decoding, our design applies optimizations to both data access and value computation. Specifically, multiple data groups are fetched from DDR memory in a single clock cycle and concurrently processed by the \textsc{FINDMAX} module.
Additionally, the design employs pipelining to overlap memory access with computation in the \textsc{FINDMAX} module, thereby hiding memory latency and improving overall throughput.
While beam search introduces irregular memory access patterns that may limit data parallelism, it significantly reduces on-chip memory usage, enabling more efficient deployment on resource-constrained hardware.

\section{Experiments}
In this section, we present the performance of the \textsc{FLASH Viterbi} and \textsc{FLASH-BS Viterbi} algorithms under various HMM scenarios, compared with several baseline methods.

\subsection{Experimental Setting}

\noindent
\textbf{Data.} 
We generate various Erdős–Rényi transition graphs to simulate different HMM scenarios. 
Specifically, we vary the edge probability \( p \) (i.e., the likelihood of transition between states), the state space size \( K \), the observation sequence length \( T \), and assess each algorithm’s decoding time and memory usage under different conditions. 
Additionally, to evaluate the impact of beam width on memory usage, decoding time, and recognition accuracy, we construct a forced-alignment dataset with \( K \) = 3965 and \( T \) = 256. The dataset is generated using the HTK toolkit~\cite{HTK} to force-align speech transcriptions~\cite{factor} from the TIMIT corpus~\cite{TIMIT}.

\noindent
\textbf{Platform.} 
All CPU experiments used 2×16-core Xeon 6226R CPUs (2.90\,GHz, 512\,GB RAM), and the FPGA accelerator was implemented in Verilog on a Xilinx XCZU7EV at 200\,MHz.

\noindent
\textbf{Baselines.} 
On the CPU platform, we compare the \textsc{FLASH} variants (\textsc{FLASH Viterbi} and \textsc{FLASH-BS Viterbi}) against five baselines:  
(i) Vanilla Viterbi~\cite{viterbi_algorithm}: the standard Viterbi algorithm;  
(ii) Checkpoint Viterbi~\cite{checkpoint}: stores intermediate decoding states every \( \sqrt{T} \) steps to reduce memory usage; 
(iii) SIEVE-Mp~\cite{SIEVE}: applies a recursive divide-and-conquer strategy to reduce memory usage; 
(iv) SIEVE-BS~\cite{SIEVE_bs}: a SIEVE variant with static beam search; 
(v) SIEVE-BS-Mp~\cite{SIEVE_bs}: a SIEVE-Mp variant with static beam search.
All baselines were originally implemented in Python. 
The \textsc{FLASH} variants were implemented in C as it offers higher efficiency and is better suited for resource-constrained edge deployment. To ensure fair comparison, we re-implemented all baselines in C and using these C versions for comparison in all experiments unless specified.
On the FPGA platform, we compare our accelerator with Reduced-Memory Viterbi, a SOTA beam search–based implementation~\cite{ReducedViterbi}. 

\noindent
\textbf{Parameter settings.} 
To evaluate the impact of individual parameters, we use default settings unless otherwise specified: observation space size \( |O| = 50 \), edge probability \( p = 0.253 \), state space size \( K = 512 \), and sequence length \( T = 512 \). The default settings are chosen with reference to~\cite{SIEVE} and are commonly adopted in evaluating Viterbi algorithms.
In beam width experiments using the forced-alignment dataset, the value of \( p \) is inherently determined by the dataset and is unaffected by our default setting. Emission probabilities are randomized. Beam width \( B \) is initially set to \( K \), with all candidate paths retained by default. For edge probability analysis, \( p \) is varied from 0.05 to 1 in multiplicative steps of 1.5. The state space size \( K \) and sequence length \( T \) are varied from 32 to 2048. The beam width \( B \) is decreased from 1024 to 32.

\subsection{Overall Comparison}

\begin{figure*}[htbp]
    \centering
    \includegraphics[width=0.95\textwidth]{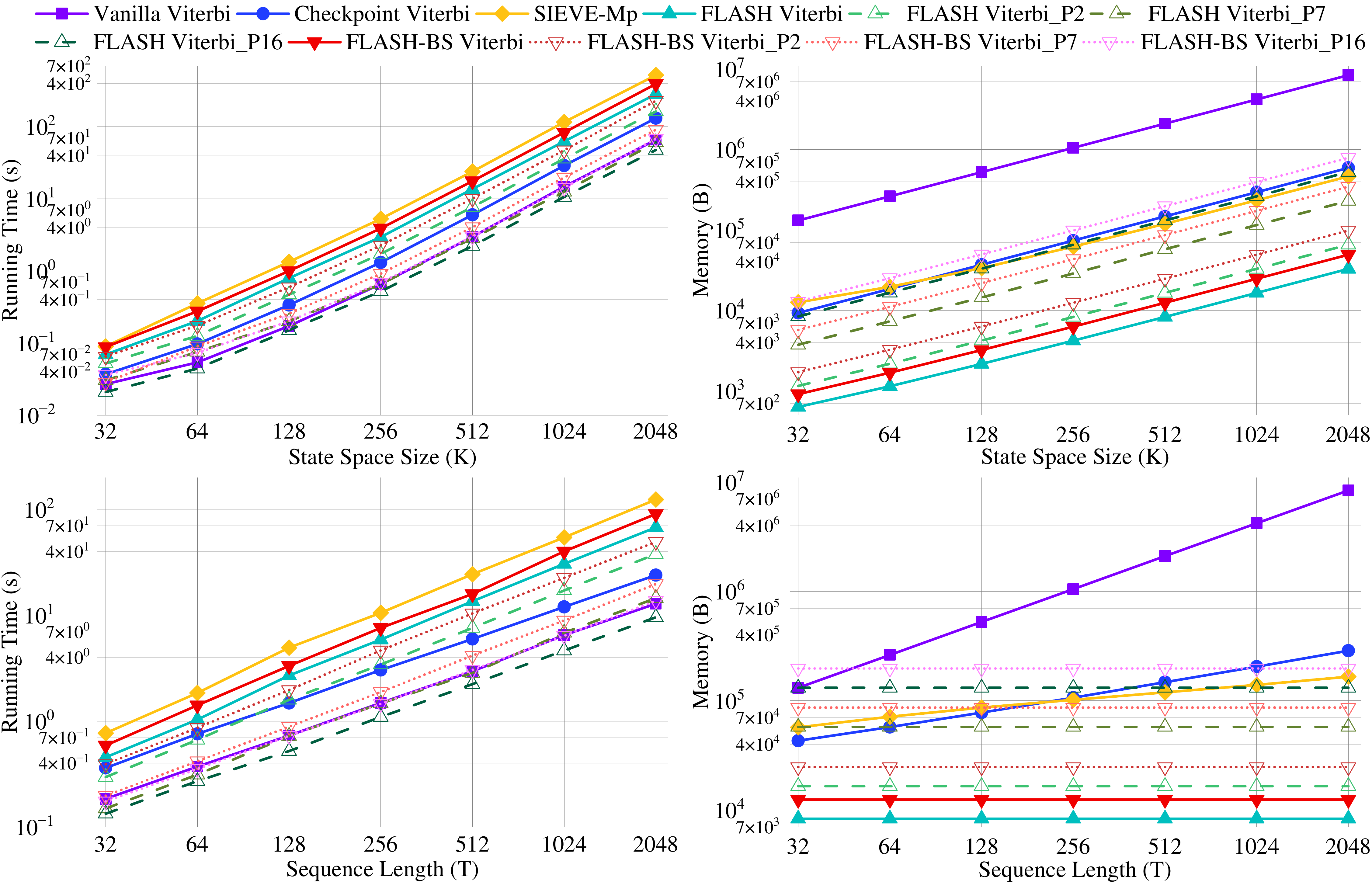}
    \caption{Decoding time and memory usage vs. state space size and sequence length. \textsc{FLASH Viterbi\_P\emph{n}} and \textsc{FLASH-BS Viterbi\_P\emph{n}} represent the algorithms executed under parallelism degree \(n\).}
\label{Fig:Sequence_Length_State_Space_Size}
    \vspace{-2mm}
\end{figure*}

\begin{figure*}[tb]
    \centering
    \includegraphics[width=0.95\textwidth]{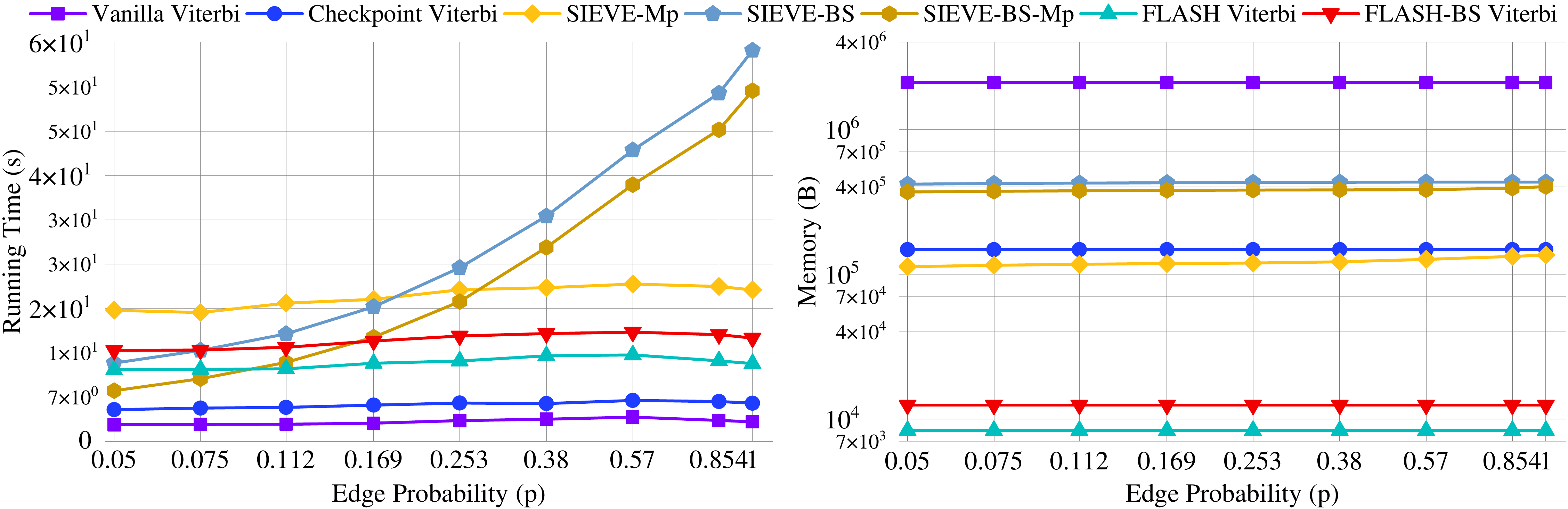}
    \caption{Decoding time and memory usage vs. edge probability.}
    \label{Fig:edge probability}
    \vspace{-4mm}
\end{figure*}

In this section, we evaluate the overall performance of \textsc{FLASH} variants against baseline algorithms in both their original Python and optimized C implementations, using the forced-alignment dataset. 
We report decoding time (in seconds) and memory usage (in bytes). Decoding time is measured from the input of the HMM model to the completion of the optimal path, while memory usage includes all relevant data structures used during decoding. 
We also report the performance gains of the C-optimized baselines over their original Python versions in the \textbf{Py/C ratio} columns.
We evaluate algorithms in two categories: with beam search integration and without beam search. For each category, we compare \textsc{FLASH} variants with existing baselines under different levels of parallelism.

\subsubsection{With Beam Search Integration}

Table~\ref{tab:overall-comparison} reports the results of the comparison. \textsc{FLASH-BS Viterbi} consistently outperforms both SIEVE-BS and SIEVE-BS-Mp in decoding time and memory usage. In sequential execution, it achieves \(\mathbf{3.5\times}\) and \(\mathbf{2.7\times}\) speedups over their C-optimized implementations, while reducing memory consumption by \(\mathbf{901.5\times}\) and \(\mathbf{778.4\times}\), respectively. For the original Python implementations, these advantages become even more significant, with speedups increasing to \(\mathbf{14.7\times}\) and \(\mathbf{12.9\times}\), and memory reductions reaching \(\mathbf{7129.4\times}\) and \(\mathbf{6470.4\times}\).
The decoding speedup primarily arises from three design optimizations: (i) a non-recursive divide-and-conquer strategy that eliminates recursive calls and BFS traversal overhead; (ii) a double-buffered memory scheme that reduces data transfer latency during dynamic programming; and (iii) an integrated pruning-parallelism mechanism that eliminates redundant path computations across subtasks. 
In terms of memory usage, the improvement mainly stems from the efficient memory structure of dynamic beam search. Unlike SIEVE-BS and SIEVE-BS-Mp, which store the entire \(K\) state space, our method maintains only \(B\) candidate states at each timestep, resulting in substantial memory savings.

The advantages of \textsc{FLASH-BS Viterbi} become even more pronounced under parallel execution. At a parallelism level of 16, it achieves speedups of \(\mathbf{18.3\times}\) and \(\mathbf{14.1\times}\) over the C-optimized implementations of SIEVE-BS and SIEVE-BS-Mp, while reducing memory consumption by \(\mathbf{58.2\times}\) and \(\mathbf{50.3\times}\), respectively.
These results indicate that the performance of our algorithm can be further enhanced by leveraging parallelism.

\subsubsection{Without Beam Search Integration}

In the non–beam search category under sequential execution, \textsc{FLASH Viterbi} shows slower speed than the fastest baseline, Vanilla Viterbi. 
However, it achieves substantially lower memory consumption, reducing usage by \(\mathbf{512.1\times}\) in the original Python implementation and by \(\mathbf{127.8\times}\) in the C-optimized version, thereby leading to a markedly improved overall time–space trade-off. 
By tuning the degree of parallelism to balance decoding time and memory usage, \textsc{FLASH Viterbi} achieves a more favorable trade-off. 
As shown in Table~\ref{tab:overall-comparison}, at a parallelism level of 7, it outperforms all baselines in both decoding time and memory usage, demonstrating superior overall resource efficiency.

\subsection{Effect of HMM Conditions}

In this section, we evaluate the scalability and efficiency of various decoding algorithms under diverse HMM settings, focusing on structural scale—determined by sequence length and state space size—and transition density, defined by edge probabilities. We specifically compare the sequential performance of \textsc{FLASH} variants against baseline algorithms.

\begin{figure*}[htbp]
    \centering
    \includegraphics[width=0.95\textwidth]{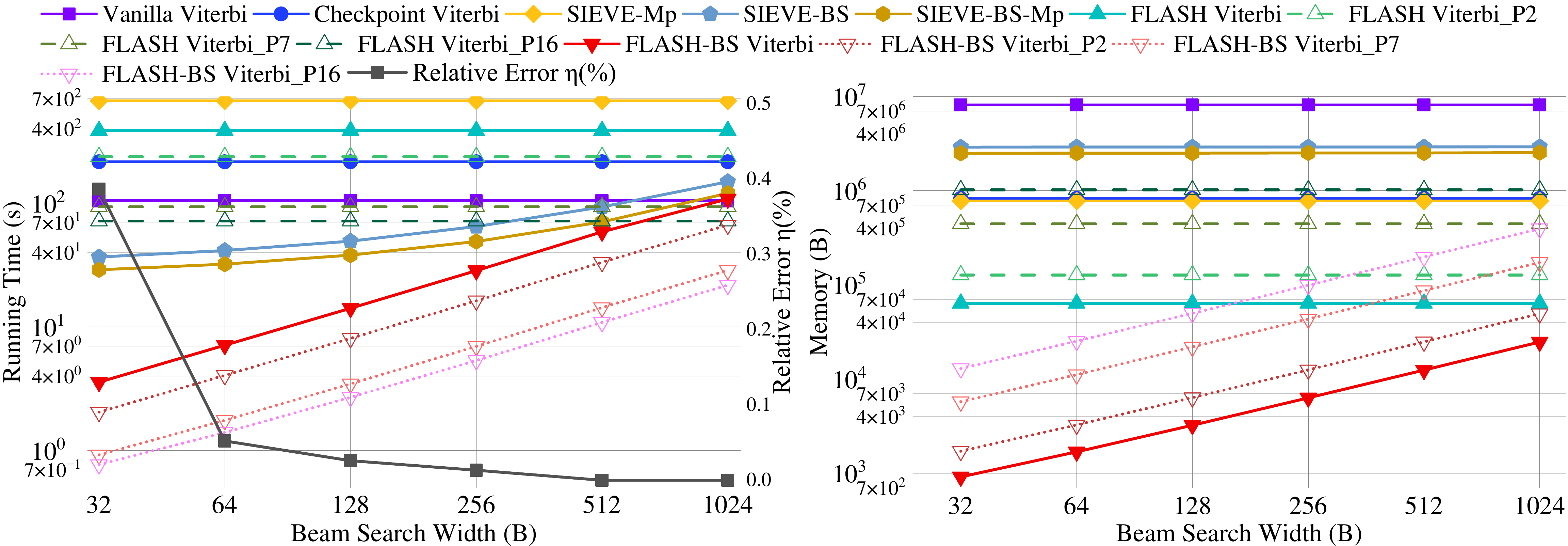}
    \caption{Decoding time, memory usage, and relative error vs. beam search width.  \text{Relative error} reflects decoding accuracy loss.}
    \label{Fig:Beam_width}
    \vspace{-2mm}
\end{figure*}

\begin{figure*}[htbp]
    \centering
    \includegraphics[width=0.95\textwidth]{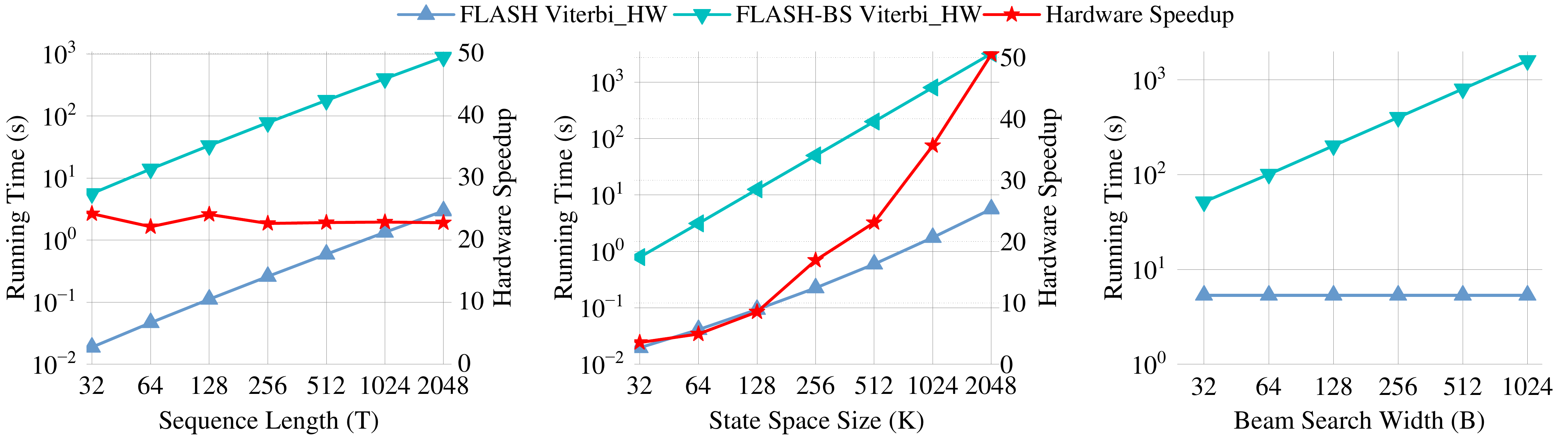}
    \caption{ Decoding time and hardware speedup of \textsc{FLASH Viterbi} and \textsc{FLASH-BS Viterbi} accelerators.}
    \label{Fig:hardware}
    \vspace{-6mm}
\end{figure*}

\subsubsection{Performance Under Varying Sequence Lengths and State Space Sizes}

Figure~\ref{Fig:Sequence_Length_State_Space_Size} presents decoding performance under varying sequence lengths \( T \) and state space sizes \( K \). Although \textsc{FLASH Viterbi} and SIEVE-Mp share the same theoretical complexity of \( \mathcal{O}(K^2 T \log T) \), \textsc{FLASH Viterbi} achieves lower actual runtime due to its non-recursive design and double-buffered memory scheme. 
These optimizations eliminate recursion and BFS overhead in divide-and-conquer steps, while reducing data transfer latency during dynamic programming.
Compared to \textsc{FLASH Viterbi}, \textsc{FLASH-BS Viterbi} shows a slight increase in runtime, primarily due to heap maintenance overhead.
Regarding memory usage, \textsc{FLASH Viterbi} and \textsc{FLASH-BS Viterbi} consistently achieve the lowest footprint across varying \( T \) and \( K \), by storing only the state transitions between the division point and the terminal timestep. This design decouples memory usage from sequence length \( T \). Additionally, the non-recursive formulation and double-buffered memory scheme further reduce memory usage, enabling our algorithms to outperform SIEVE-Mp, which is the most theoretically memory-efficient baseline to date.

\subsubsection{Performance Under Varying Edge Probabilities}

Figure~\ref{Fig:edge probability} presents decoding time and memory usage across varying edge probabilities \( p \). Compared to SIEVE-BS and SIEVE-BS-Mp, \textsc{FLASH} variants
are less sensitive to changes in transition density. This contrast stems from their underlying HMM representations: \textsc{FLASH} variants use a standard state-matrix-based approach, while SIEVE-BS and SIEVE-BS-Mp adopt a token-passing formulation. Although token-passing avoids traversing unreachable states, it incurs additional overhead due to multiple per-state comparisons required to identify the most probable incoming transitions. As a result, SIEVE-BS-Mp achieves faster decoding at extremely sparse settings (\( p \leq 0.075 \)), but its runtime increases rapidly as \( p \) grows. In contrast, \textsc{FLASH} variants maintain stable performance across all values of \( p \), demonstrating robustness to graph sparsity.

\subsection{Impact of Individual Parameters}

In this section, we evaluate the decoding performance of \textsc{FLASH} variants
by adjusting two internal parameters: parallelism degree $n$ and beam width $B$. 

\subsubsection{Impact of Parallelism Degree on Runtime and Memory Usage}

Figure~\ref{Fig:Sequence_Length_State_Space_Size} illustrates the performance of \textsc{FLASH Viterbi} and \textsc{FLASH-BS Viterbi} under different parallelism levels (\(n = 2, 7, 16\)). By tuning the parallelism degree \(n\), \textsc{FLASH Viterbi} outperforms existing baselines in both decoding time and memory usage. For instance, at \(T = 512\) and \(K = 256\)–\(2048\), \textsc{FLASH Viterbi\_P7} achieves lower decoding latency than the fastest baseline (Vanilla Viterbi) while consuming even less memory than the most memory-efficient baseline (SIEVE-Mp). 
This improvement stems from its parallel execution strategy: decoding subtasks are evenly distributed across threads, and each subtask stores the state information at the division point during dynamic programming. By eliminating the backtracking table lookup required by Vanilla Viterbi, this design significantly reduces decoding latency. Although memory consumption scales linearly with the parallelism degree \(n\), the inherently low footprint of \textsc{FLASH Viterbi} allows its parallel variants to remain more memory-efficient than existing baselines. More importantly, unlike fixed-cost baselines, 
\textsc{FLASH Viterbi} can fully utilize memory and compute resources to meet both real-time and deployment requirements.
For example, \textsc{FLASH Viterbi\_P16} achieves the lowest latency among all baselines at all times, with only a slight increase in memory usage. These results highlight the efficiency and adaptability of \textsc{FLASH Viterbi}.

\subsubsection{Trade-offs Introduced by Beam Width in Accuracy and Efficiency}
To assess the impact of beam search on decoding accuracy, we follow~\cite{SIEVE_bs} and report the relative error in log-likelihood, defined as \( \eta = \frac{| \ell_{\text{OPT}} - \ell |}{\ell_{\text{OPT}}} \), where \( \ell_{\text{OPT}} \) and \( \ell \) denote the log-likelihoods of the optimal path and the path obtained via beam search, respectively.
Figure~\ref{Fig:Beam_width} illustrates the performance trends and corresponding relative decoding error under different beam widths. In terms of accuracy, when the beam width \( B \geq 64 \), the relative decoding error remains below 0.05\%. However, reducing \( B \) to 32 leads to a sharp error increase to 0.39\%, due to critical states being excluded from the narrower beam width. When the structural scale of a decoding task is held constant (i.e., fixed \( K \) and \( T \)), the impact of beam width on accuracy depends on the statistical characteristics of the task data, particularly the distribution of transition probabilities in \( \mathcal{A} \), \( \mathcal{B} \), and \( \pi \). In cases where transitions are dense or probabilities are uniformly distributed, optimal path states are more likely to fall outside the local top \( B \) at certain timesteps, resulting in accuracy degradation.
In contrast, the impact of beam width on the runtime and memory usage of \textsc{FLASH-BS Viterbi} is independent of the data distribution, depending solely on the state space size \( K \) and the chosen beam width \( B \). Both theoretical analysis and experimental results confirm that the runtime and memory usage are reduced to \( B/K \) of the original values. Overall, the performance–cost trade-off of beam search is influenced by domain-specific data distribution. We recommend analyzing this distribution before using small beam widths, and adjusting \( B \) accordingly. For the speech recognition task considered, \textsc{FLASH-BS Viterbi} achieves a 69.5× speedup with only a 0.05\% increase in error, demonstrating substantial gains in efficiency.

\subsection{Experiments on Resource-Constrained Devices}
To assess the efficiency and deployability of \textsc{FLASH} variants on resource-constrained devices, we evaluate two representative platforms: an FPGA accelerator and a Raspberry Pi 5. The FPGA experiments highlight the hardware friendliness and acceleration potential of our algorithms, while the Raspberry Pi experiments validate their efficiency on an embedded platform.
\subsubsection{FPGA-Based Acceleration}

Figures~\ref{Fig:hardware} show the decoding time of \textsc{FLASH Viterbi} and \textsc{FLASH-BS Viterbi} accelerators, along with the hardware speedup of \textsc{FLASH Viterbi}. The results indicate that speedup increases superlinearly with state space size \(K\), reaching 50.5$\times$ at \(K = 2048\), which demonstrates the efficiency of our hardware design. However, beam search disrupts sequential DDR memory access, leading to longer decoding time for \textsc{FLASH-BS Viterbi (HW)} compared with \textsc{FLASH Viterbi (HW)}. Nevertheless, its runtime remains comparable to that of the software implementation.

\begin{table} [tb]
\centering
\renewcommand{\arraystretch}{1.05} 
\setlength{\tabcolsep}{2pt} 
\caption{
Comparison of resource utilization and power consumption between \textbf{\textsc{FLASH-BS Viterbi}} (FLASH-BS) and \textbf{Reduced-Memory Viterbi} (RM-Viterbi).
The ``Decoding Unit'' refers only to the decoding logic; ``Entire Decoder'' includes the full accelerator with DDR interface modules. 
}
\resizebox{\linewidth}{!}{
\begin{tabular}{|c|l|ccc|ccc|c|}
\hline
\multirow{2}{*}{\textbf{Width}} & \multirow{2}{*}{\textbf{Algorithm}} 
& \multicolumn{3}{c|}{\textbf{Decoding Unit}} 
& \multicolumn{3}{c|}{\textbf{Entire Decoder}} 
& \multirow{2}{*}{\textbf{\makecell{Power\\(W)}}} \\
\cline{3-8}
& & BRAM & DSP & LUT+FF & BRAM & DSP & LUT+FF & \\
\hline
\multirow{2}{*}{32K} 
& FLASH-BS     & 173 & 0  & 13115 & 198.5 & 3  & 42445 & 1.835 \\
& RM-Viterbi   & 757 & 30 & 13927 & 764   & 30 & 35136 & 2.152 \\
\hline
\multirow{2}{*}{512} 
& FLASH-BS     & 7   & 0  & 13127 & 32.5  & 3  & 41954 & 1.737 \\
& RM-Viterbi   & 65  & 30 & 13927 & 72    & 30 & 34899 & 1.740 \\
\hline
\end{tabular}
}
\label{tab:Resource usage}
\end{table}

Table~\ref{tab:Resource usage} presents the resource utilization and power consumption of \textsc{FLASH-BS Viterbi}, compared with Reduced-Memory Viterbi (RM-Viterbi), at beam widths of 32K and 512. Power consumption was estimated using AMD’s Power Design Manager~\cite{amd_pdm_tool}. As the beam width decreases, \textsc{FLASH-BS Viterbi} achieves significant resource savings. Compared to RM-Viterbi, it incurs a slight LUT+FF overhead due to parallel DDR transfers, which can be mitigated by reducing the degree of transfer parallelism. For all other resource metrics, \textsc{FLASH-BS Viterbi} demonstrates substantial improvements. At a beam width of 32K, the Entire Decoder consumes only 26.0\% and the Decoding Unit 22.9\% of the BRAMs used by RM-Viterbi. This reduction in resource usage translates into improved power efficiency~\cite{Power_FPGA1,Power_FPGA2}, with \textsc{FLASH-BS Viterbi} consuming approximately 15\% less power than RM-Viterbi, demonstrating its hardware efficiency.

\subsubsection{Deployment on Raspberry Pi}
We further evaluate our algorithms on a resource-constrained device, the Raspberry Pi 5 (8GB). The beam width is set to half the state space. As shown in Figure~\ref{Fig:Raspberry}, although the limited computational capability of Raspberry Pi slightly reduces the benefits of multi-threading, \textsc{FLASH Viterbi\_P16} still matches or even outperforms the fastest baseline (Vanilla Viterbi). Moreover, beam search in \textsc{FLASH-BS Viterbi} further amplifies this advantage, highlighting the efficiency of \textsc{FLASH} variants on resource-constrained platforms.

\begin{figure}[t!]
    \vspace{0mm}
    \centering
    \includegraphics[width=1\columnwidth]{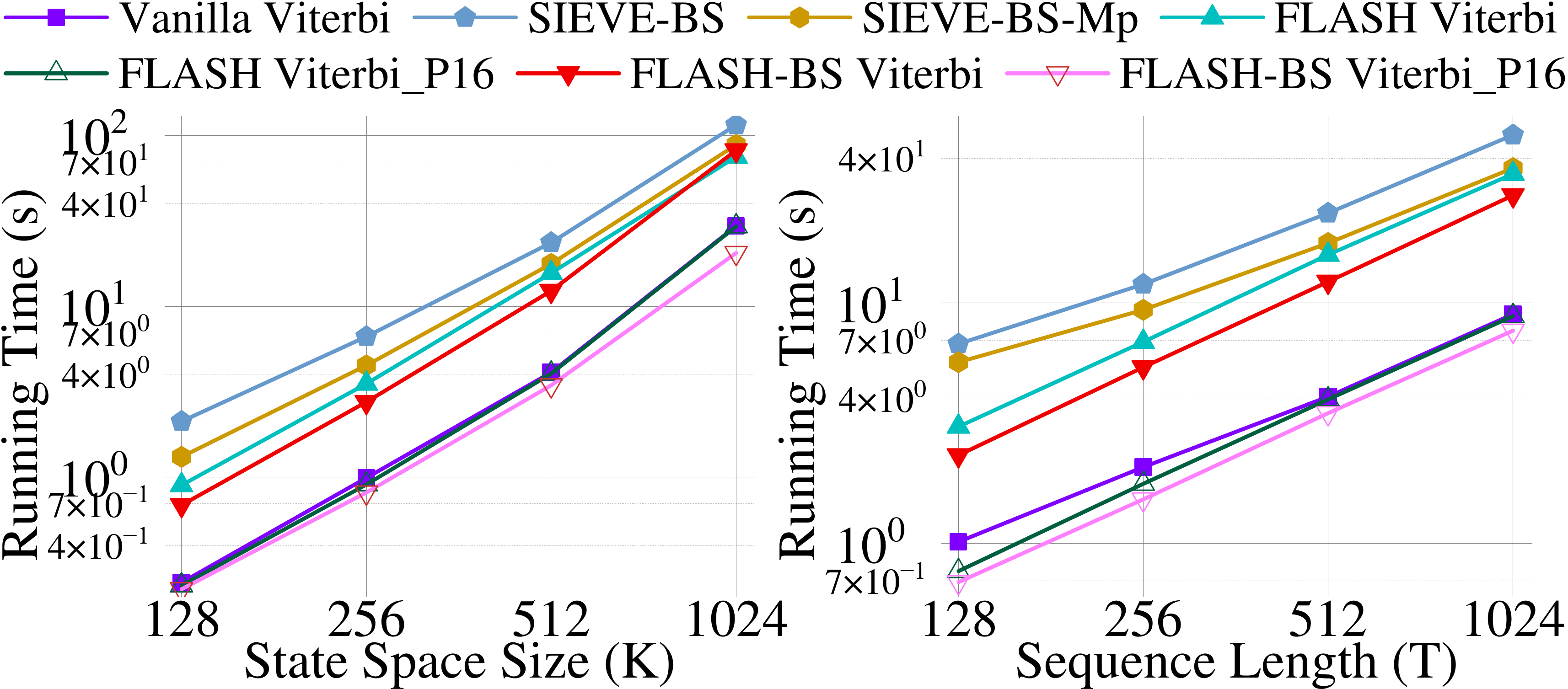}
    \caption{Decoding time of \textsc{FLASH} variants and baselines on the Raspberry Pi.}
    \label{Fig:Raspberry}
\end{figure}

\section{Conclusion}\label{sec:conclusion}
In this work, we propose \textsc{FLASH Viterbi} and \textsc{FLASH-BS Viterbi}, two decoding algorithms designed to address the performance and memory limitations of existing Viterbi implementations. Leveraging a non-recursive divide-and-conquer paradigm and a pruning and parallelization strategy, both methods enable highly parallel execution while reducing scheduling and memory overhead. The beam search variant further employs dynamic strategies and memory-efficient data structures to enhance adaptability under complex scenarios. To support deployment on edge platforms, we develop FPGA-based hardware accelerators for both algorithms, achieving substantial speedups with minimal resource consumption. Extensive experiments demonstrate the superiority of our methods in runtime and memory efficiency, while hardware results confirm their practicality in edge computing environments. 
Future work includes extending our framework to practical applications and exploring GPU-based acceleration.

\bibliographystyle{IEEEtran}  
\bibliography{ref}            

\end{document}